\newcommand{\ie}{i.e., }
\newcommand{\etal}{et al.\xspace}
\newcommand{\wrt}{w.r.t.\ }
\let\oldnl\nl
\newcommand{\nonl}{\renewcommand{\nl}{\let\nl\oldnl}}
\newcommand{\RR}{\mathbf{R}\xspace}
\newcommand{\N}{\mathbf{N}\xspace}
\newcommand{\Z}{\mathbf{Z}\xspace}
\newcommand{\STN}{STN\xspace}
\newcommand{\MPG}{MPG\xspace}
\newcommand{\HTN}{HyTN\xspace}
\newcommand{\DTP}{DTP\xspace}
\newcommand{\HTNC}{HyTN-Consistency\xspace}
\newcommand{\CSTP}{CSTP\xspace}
\newcommand{\CSTN}{CSTN\xspace}
\newcommand{\eDCC}{CSTN-$\epsilon$-DC\xspace}
\newcommand{\DCC}{CSTN-DC\xspace}
\newcommand{\scHst}{\emph{scHst}\xspace}
\newcommand{\Con}{\emph{Con}\xspace}
\newcommand{\Sub}{\emph{Sub}\xspace}
\def\Ord{{\cal O}}
\def\C{{\cal C}}
\def\H{{\cal H}}
\def\A{{\cal A}}
\def\S{{\cal S}}
\newcommand{\figref}[1]{Fig.~\ref{#1}}
\newtheorem{Thm}{Theorem}
\newtheorem{Lem}[]{Lemma}
\theoremstyle{plain}
\newtheorem{Def}{Definition}
\newtheorem{Rem}{Remark}
\newtheorem{Ex}{Example}
\newcommand{\removelatexerror}{\let\@latex@error\@gobble}
\tikzstyle{node}=[circle,draw,inner sep=2pt,transform shape,minimum size=1.75em]
\tikzstyle{task}=[draw,rectangle,inner sep=1.5pt,transform shape
\tikzstyle{connector}=[draw,diamond,shape aspect=1,inner sep=1pt,transform shape
\tikzstyle{StartCase}=[circle,draw,minimum size=.75cm,transform shape]
\tikzstyle{EndCase}=[circle,draw,ultra thick,minimum size=.75cm,transform shape]
\tikzstyle{smallLabel}=[font=\sffamily\scriptsize,inner sep=1pt,transform shape]
\tikzstyle{timeLabel}=[smallLabel,midway,transform shape]
\tikzstyle{minWidth}=[text width=.9cm]
\tikzstyle{info}=[rounded corners,fill=yellow,text width=1cm,text centered,inner sep=1pt]
\tikzstyle{infoLine}=[thin,decorate,decoration={snake,amplitude=.4mm, segment length=2mm, post length=1mm}]
\tikzstyle{edgeLabel}=[font=\tiny,sloped]
\tikzstyle{infoRow}=[rounded corners,fill=yellow,inner sep=1pt]
\tikzstyle{punto}=[circle,draw,fill=black,minimum size=2bp,inner sep=0pt,outer sep=0pt]
\tikzstyle{blackNode}=[fill=black!30]
\tikzstyle{crosses}=[decorate,decoration={name=crosses,segment length=2mm,post length=2mm}]
\tikzstyle{every picture}=[>=latex]
\tikzstyle{every label}=[inner sep=2pt]
\title{Dynamic Consistency of Conditional Simple Temporal 
Networks via Mean Payoff Games: \\ {\LARGE a Singly-Exponential Time DC-Checking}}
\author{\IEEEauthorblockN{Carlo Comin
}
\IEEEauthorblockA{Department of Mathematics \\ University of Trento, Trento, Italy\\ 
carlo.comin@unitn.it
}
\and
\IEEEauthorblockN{Romeo Rizzi}
\IEEEauthorblockA{Department of Computer Science \\ University of Verona, Verona, Italy\\ 
romeo.rizzi@univr.it}
}
\begin{document}
\maketitle

\begin{abstract}
Conditional Simple Temporal Network (\CSTN) is a constraint-based graph-formalism for conditional temporal planning. 
It offers a more flexible formalism than the equivalent CSTP model of Tsamardinos, Vidal and Pollack, 
from which it was derived mainly as a sound formalization.
Three notions of consistency arise for \CSTN{s} and CSTPs:
weak, strong, and dynamic. Dynamic consistency is
the most interesting notion, but it is also the most challenging and it was conjectured to be hard to assess.
Tsamardinos, Vidal and Pollack gave a doubly-exponential time algorithm
for deciding whether a \CSTN is dynamically-consistent and to produce, 
in the positive case, a dynamic execution strategy of exponential size. 
In the present work we offer a proof that deciding whether
a \CSTN is dynamically-consistent is \coNP-hard and provide
the first singly-exponential time algorithm for this problem,
also producing a dynamic execution strategy whenever the input \CSTN is dynamically-consistent. 
The algorithm is based on a novel connection with Mean Payoff Games, 
a family of two-player infinite games played on finite graphs,    
well known for having applications in model-checking and formal verification. 
The presentation of such connection is mediated by the Hyper Temporal Network model, 
a tractable generalization of Simple Temporal Networks 
whose consistency checking is equivalent to determining Mean Payoff Games. 
In order to analyze the algorithm we introduce a refined notion of dynamic-consistency, 
named $\epsilon$-dynamic-consistency, 
and present a sharp lower bounding analysis on the critical value of the reaction
time $\hat{\varepsilon}$ where the \CSTN transits from being, to not being, dynamically-consistent.
The proof technique introduced in this analysis of $\hat{\varepsilon}$
is applicable more generally when dealing with linear difference constraints which include strict inequalities.
\end{abstract}
\begin{IEEEkeywords}
Conditional Simple Temporal Networks, Dynamic Consistency, Mean Payoff Games, 
Hyper Temporal Networks, Singly-Exponential Time, Reaction Time Analysis.
\end{IEEEkeywords}

\section{Introduction and Motivation}\label{sect:introduction}

In \emph{temporal planning} and \emph{temporal scheduling},
\emph{Simple Temporal Networks} (\STN{s}) \cite{DechterMP91} 
are directed weighted graphs, where nodes represent events to be scheduled in time  
and arcs represent temporal distance constraints between pairs of events. 
Recently, \STN{s} have been generalized into \emph{Hyper Temporal Networks} (\HTN{s}) \cite{CPR2014, CPR2015} by considering
weighted directed hypergraphs, where each hyperarc models a \emph{disjunctive} temporal constraint called \emph{hyper-constraint}. 
The computational equivalence between checking the consistency of {\HTN}s 
and determining winning regions in \emph{Mean Payoff Games}~({\MPG}s)~\cite{EhrenfeuchtMycielski:1979, ZwickPaterson:1996, brim2011faster} 
was pointed out as well in~\cite{CPR2014, CPR2015},   
where the approach was shown to be robust thanks to extensive experimental evaluations~\cite{CPR2014, CPR2015, BC12}.
Mean Payoff Games are a family of two-player 
infinite games played on finite graphs, 
well known for having theoretical interest in computational complexity, 
being it one of the few (natural) problems lying in  
$\NP\cap\coNP$, as well as various applications in model-checking 
and formal verification~\cite{Gradel2002}. 

The present work unveils that \HTN{s} and \MPG{s} are a natural 
underlying combinatorial model for checking the dynamic-consistency of conditional temporal problems.
We focus on \emph{Conditional Simple Temporal Problems} (\CSTP) \cite{TVP2003} 
and on their graph-based counterpart \emph{Conditional Simple Temporal Networks} (\CSTN) \cite{HPC12},   
a constraint-based model for conditional temporal planning. 
The \CSTN formalism extends {\STN}s in that:
(1) some of the nodes are called \emph{observation events} and to each of them 
is associated a boolean variable, to be disclosed only at execution time;
(2) \emph{labels} (i.e. conjunctions over the literals) 
are attached to all nodes \emph{and} constraints, to indicate the situations in which each of them is required. 
The planning agent must schedule all the required nodes, meanwhile respecting all the required temporal constraints among them. 
This extended framework allows for the off-line construction of conditional plans that are guaranteed to satisfy complex 
temporal constraints. Importantly, this can be achieved even while allowing for the decisions about the precise timing of actions 
to be postponed until execution time, in a least-commitment manner, thereby adding flexibility and making it possible to 
adapt the plan dynamically, during execution, in response to the observations made~\cite{TVP2003}.

Three notions of consistency arise for \CSTN{s}: weak, strong, and \emph{dynamic}. 
Dynamic consistency (DC) is in fact the most interesting one, as it requires the existence of conditional plans where 
decisions about the precise timing of actions are postponed until execution time, 
but it anyhow guarantees that all the relevant constraints will be ultimately satisfied. 
Still, it is the most challenging and it was conjectured to be hard to assess by Tsamardinos, Vidal and Pollack~\cite{TVP2003}. 
Indeed, the best-so-far algorithm for deciding whether a \CSTN is dynamically-consistent is doubly-exponential time~\cite{TVP2003}. 
It first builds an equivalent Disjunctive Temporal Problem (\DTP) of size exponential in the input \CSTN, 
and then applies to it an exponential time \DTP's algorithm to check its consistency. 
However, this approach turns out to be limitative in practice: to the best of our knowledge, 
some experimental studies have shown that the resolution procedures, as well as the heuristics, for solving general 
\DTP{s} becomes quite burdensome with $\sim30,35$ \DTP's variables~\cite{TsamardinosP03, MoffittP05, Oddi14}, 
thus dampening the practical applicability of the approach. 

\paragraph{Contribution}
In the present work we first offer a proof that deciding whether
a \CSTN is dynamically-consistent is \coNP-hard.
Secondly, and most importantly, we unveil a connection between the problem of checking dynamic-consistency 
of \CSTN{s} and that of determining \MPG{s}, 
thus providing the first sound-and-complete singly-exponential time algorithm for this same task of 
deciding the dynamic-consistency and yielding a dynamic execution strategy for \CSTN{s}. 
The algorithm can actually be applied to a wider class of problems and it is based on representing
any given instance on an exponential sized network, as first suggested in \cite{TVP2003}. 
The difference, however, is that we propose to map \CSTN{s} on \HTN{s}/\MPG{s} rather than on \DTP{s}. 
This makes a relevant difference since the consistency check for \HTN{s}
can be reduced to \MPG{s} determination~\cite{CPR2014, CPR2015}, 
which is amenable to practical and effective pseudo-polynomial time algorithms 
(indeed, in several cases the resolution methods for determining \MPG{s} 
exhibit even a strongly-polynomial time behaviour~\cite{CPR2015, BC12, AllamigeonBG14, ChatterjeeHKN14}). 
To summarize, we obtain an improved upper bound on the theoretical time complexity of the DC-checking for \CSTN{s} 
(\ie from $\text{2-EXP}$ to $\text{NE}\cap \text{coNE}$) together with a faster DC-checking procedure, 
which can be used on \CSTN{s} with a larger number of propositional variables and event nodes. 
At the heart of the algorithm a suitable reduction to \MPG{s} is mediated by the \HTN model, 
\ie the algorithm decides whether a \CSTN is dynamically-consistent by solving a carefully constructed \MPG.
As a final contribution, in order to analyze the algorithm, we introduce a novel and refined notion of dynamic-consistency, 
named $\epsilon$-dynamic-consistency, and present a sharp lower bounding analysis on the critical value of the 
\emph{reaction time} $\hat{\varepsilon}$ where the \CSTN transits from being, to not being, dynamically-consistent.
We believe that this contributes to clarifying  
(with respect to previous literature~\cite{TVP2003, HPC12})  
the role played by the reaction time $\hat{\varepsilon}$ 
in checking the dynamic-consistency of \CSTN{s}.
Furthermore, the proof technique introduced in this analysis of $\hat{\varepsilon}$
is applicable more in general when dealing with linear difference constraints which include strict inequalities, 
therefore, it may  be useful in the analysis of other models of temporal constraints.

\paragraph{Organization}
In Section~\ref{sect:backgroundandnotation}~A we recall the basic formalism, 
terminology and known results on \CSTP{s} and \CSTN{s}. Section~\ref{sect:backgroundandnotation}~B is devoted 
to recall the \HTN model, its computational equivalence with \MPG{s} and the related algorithmic results. 
Section~\ref{sect:Algo} tackles on the algorithmics of dynamic-consistency: firstly, we provide a $\coNP$-hardness lower bound, 
then, we describe the connection with \HTN{s}/\MPG{s} and present a (pseudo) singly-exponential time DC-checking procedure.
Section~\ref{sect:epsilon} is devoted to present a sharp lower bounding analysis on the critical value of the 
\emph{reaction time} $\hat{\varepsilon}$ where the \CSTN transits from being, to not being, dynamically-consistent.
In Section~\ref{sect:relatedworks} some related works are discussed.
The paper concludes in Section~\ref{sect:conclusions}. 

\section{Background}\label{sect:backgroundandnotation}
In order to provide a formal support to the present work, 
this section recalls the basic formalism, terminology and known results on \CSTP{s} and \CSTN{s}.
Since the forthcoming definitions are mostly inherited from the literature, 
the reader is referred to \cite{TVP2003} and \cite{HPC12} for an 
intuitive semantic discussion and for some clarifying examples of the very same model.

To begin with, our graphs are directed and weighted on the arcs.
Thus, if $G=\langle V,A\rangle$ is a graph, 
then every arc $a\in A$ is a triplet $\langle u,v,w_a\rangle$ where $u=t(a) \in V$ is 
the \textit{tail} of $a$, $v=h(a) \in V$ is the \textit{head} of $a$, and  
$w_a=w(u,v)\in\Z$ the \textit{weight} of $a$. 
The following definition recalls Simple Temporal Networks (\STN{s})~\cite{DechterMP91}, 
as they provide a powerful and general tool for representing conjunctions 
of minimum and maximum distance constraints between pairs of temporal variables. 
\begin{Def}[\STN{s}]
An \STN \cite{DechterMP91} is a weighted directed graph whose nodes are events that must be placed on the real time line and whose arcs, 
called \emph{standard arcs}, express binary constraints on the allocations of their end-points in time.

An \STN $G=\langle V, A\rangle$ is called \textit{consistent} if it admits a \emph{feasible scheduling}, 
\ie a scheduling $\phi: V\mapsto \RR$ such that $\phi(v) \leq \phi(u) + w(u,v)$ for all arcs $\langle u,v, w(u,v)\rangle\in A$.
\end{Def}
 
\subsection{Conditional Simple Temporal Networks}\label{subsect:CSTN}

In 2003, Tsamardinos, Vidal and Pollack introduced the \emph{Conditional Simple Temporal Problem (\CSTP)}  
as an extension of standard temporal constraint-satisfaction models used in non-conditional temporal planning.  
A \CSTP augments an \STN to include \emph{observation} events. Each observation event has a \emph{boolean variable} (or \emph{proposition}) 
associated with it. When the observation event is executed, the truth-value of its associated proposition becomes known. 
In addition, each event and each constraint has a \emph{label} that restricts the scenarios in which it plays a role.
Although not included in the formal definition, Tsamardinos, \etal discussed some supplementary \emph{reasonability assumptions} that any well-defined 
\CSTP must satisfy. Subsequently, those conditions have been analyzed and formalized in~\cite{HPC12}, 
leading to the sound notion of \emph{Conditional Simple Temporal Network} (\CSTN), 
which is now recalled.

Let $P$ be a set of boolean variables, a \emph{label} is any (possibly empty) conjunction of variables, or negations of variables, drawn from $P$.
The \emph{empty label} is denoted by $\lambda$. The \emph{label universe} of $P$, denoted $P^*$, is the set of all (possibly empty) labels whose literals 
are drawn from $P$. Two labels, $\ell_1$ and $\ell_2$, are called \emph{consistent}, 
denoted\footnote{The notation $\Con(\cdot, \cdot)$ and $\Sub(\cdot, \cdot)$ is inherited 
from~\cite{TVP2003, HPC12}.} by $\Con(\ell_1, \ell_2)$, when $\ell_1\wedge\ell_2$
is satisfiable. A label $\ell_1$ \emph{subsumes} a label $\ell_2$, denoted$^\text{1}$ by $\Sub(\ell_1, \ell_2)$, when $\ell_1\Rightarrow\ell_2$ holds. 
We are now in the position to recall the definition of {\CSTN}s. 
\begin{Def}[{\CSTN}s]
A \emph{Conditional Simple Temporal Network (\CSTN)} is a tuple $\langle V, A, L, \Ord, \Ord{V}, P \rangle$ where:
\begin{itemize}
\item $V$ is a finite set of \emph{events}; $P=\{p_1, \ldots, p_{|P|}\}$ is a finite set of \emph{boolean variables} (or \emph{propositions});
\item $A$ is a set of \emph{labeled temporal constraints} each having the form $\langle v-u\leq w(u,v), \ell\rangle$, 
where $u,v\in V$, $w(u,v) \in \Z$, and $\ell\in P^*$;
\item $L:V\rightarrow P^*$ is a function that assigns a label to each event in $V$; $\Ord{V}\subseteq V$ is a finite set of \emph{observation events};
$\Ord:P\rightarrow \Ord{V}$ is a bijection that associates a unique observation event $\Ord(p)=\Ord_p$ to each proposition $p\in P$;
\item The following \emph{reasonability assumptions} must hold:

(\emph{WD1}) for any labeled constraint $\langle v-u\leq w, \ell\rangle\in A$ the label $\ell$ is satisfiable 
and subsumes both $L(u)$ and $L(v)$; 
intuitively, whenever a constraint $v-u\leq w$ is required to be satisfied, 
then its endpoints $u$ and $v$ must be scheduled (sooner or later) by the planning agent;

(\emph{WD2}) for each $p\in P$ and each $u\in V$ such that either $p$ or $\neg p$ appears in $L(u)$, we require: 
$\Sub(L(u), L(\Ord_p))$, and $\langle \Ord_p-u\leq-\epsilon, L(u)\rangle \in A$ for some $\epsilon > 0$;
intuitively, whenever a label $L(u)$ of an event node $u$ contains proposition $p$, and $u$ gets eventually scheduled, 
then the observation event $\Ord_p$ must be scheduled strictly before $u$ by the planning agent.

(\emph{WD3}) for each labeled constraint $\langle v-u\leq w, \ell\rangle$ and $p\in P$, 
for which either $p$ or $\neg p$ appears in $\ell$, it holds that $\Sub(\ell, L(\Ord_p))$; 
intuitively, assuming a required constraint contains proposition $p$, 
then the observation event $\Ord_p$ must be scheduled (sooner or later) by the planner.

\end{itemize}
\end{Def}

\begin{Ex}
Fig.~\ref{FIG:cstn2} depicts an example of a \CSTN $\Gamma_1$ having three event nodes 
$A$, $B$ and $C$ as well as two observation events $\Ord_p$ and $\Ord_q$. 
\end{Ex}

\begin{figure}[!htb]
\centering
\begin{tikzpicture}[arrows=->,scale=.8,node distance=1 and 1]
 	\node[node] (A') {$A$};
	\node[node, xshift=15ex,right=of A'] (B') {$B$};
	\node[node, xshift=15ex,right=of B'] (C') {$C$};
	\node[node,below=of B', label={above:$p?$}] (P') {$\Ord_p$};
	\node[node,below=of P', label={above:$q?$}] (Q') {$\Ord_q$};
	\draw[] (A') to [bend left=40] node[timeLabel,above] {$10$} (C'); 
	\draw[] (C') to [bend right=30] node[timeLabel,below] {$-10$} (A'); 
	\draw[] (A') to [] node[timeLabel,above] {$3, p \neg q$} (B'); 
	\draw[] (B') to [] node[timeLabel,above] {$2, q$} (C'); 
	\draw[] (B') to [bend right=20] node[timeLabel,above] {$0$} (A'); 
	\draw[] (A') to [] node[xshift=1ex, yshift=0ex, timeLabel,above] {$5$} (P'); 
	\draw[] (P') to [bend left=15] node[xshift=1ex, yshift=0ex, timeLabel,above] {$0$} (A'); 
	\draw[] (A') to [] node[xshift=-1ex,yshift=0ex, timeLabel,below] {$9$} (Q'); 
	\draw[] (Q') to [bend left=15] node[xshift=-1ex,yshift=0ex, timeLabel,below] {$0$} (A'); 
	\draw[] (P') to [] node[xshift=-1ex, timeLabel,above] {$10$} (C'); 
	\draw[] (Q') to [] node[xshift=2ex, timeLabel,below] {$1, \neg p$} (C'); 
\end{tikzpicture}
\caption{
A \CSTN $\Gamma=\langle V, A, L, \Ord, \Ord{V}, P\rangle$ 
having two observation events $\Ord_p$ and $\Ord_q$. 
Formally, we have $V=\{A,B,C,\Ord_p, \Ord_q\}$, $P=\{p,q\}$, 
$\Ord{V}=\{\Ord_p, \Ord_q\}$, $L(v)=\lambda$ for every $v\in V$, 
$\Ord(p)=\Ord_p, \Ord(q)=\Ord_q$.
The set of labeled temporal constraints is:  
$A=\{ \langle C-A\leq 10, \lambda \rangle, \langle A-C\leq -10, 
\lambda \rangle, \langle B-A\leq 3, p\wedge\neg q \rangle, 
\langle A-B\leq 0, \lambda \rangle,
\langle \Ord_p-A\leq 5, \lambda \rangle, 
\langle A-\Ord_p\leq 0, \lambda \rangle,
\langle \Ord_q-A\leq 9, \lambda \rangle, 
\langle A-\Ord_q\leq 0, \lambda \rangle,
\langle C-B\leq 2, q \rangle, 
\langle C-\Ord_p\leq 10, \lambda\rangle,
\langle C-\Ord_q\leq 1, \neg p\rangle
\}$.}
\label{FIG:cstn2}
\end{figure}

In the following definitions we will implicitly 
refer to some \CSTN which is denoted 
$\Gamma=\langle V, A, L, \Ord, \Ord{V}, P \rangle$.
\begin{Def}[Scenario]
A \emph{scenario} over a set $P$ of boolean variables is a truth assignment $s:P\rightarrow \{\top, \bot\}$, \ie
$s$ is a function that assigns a truth value to each proposition $p\in P$. The set of all scenarios over $P$ is denoted $\Sigma_P$.
If $s\in\Sigma_P$ is a scenario and $\ell\in P^*$ is a label, 
then $s(\ell)\in\{\top, \bot\}$ denotes the truth value of $\ell$ induced by $s$ in the natural way.
\end{Def}
Notice that any scenario $s\in\Sigma_P$ can be described by means of the label
$\ell_s\triangleq l_1\wedge\cdots\wedge l_{|P|}$ such that,
for every $1\leq i\leq |P|$, the literal $l_i\in\{p_i, \neg p_i\}$ satisfies $s(l_i)=\top$.
\begin{Ex}
Consider the set of propositional variables $P=\{p,q\}$.
The scenario $s:P\rightarrow\{\top, \bot\}$ defined as $s(p)=\top$ and  
$s(q)=\bot$ can be compactly described by the label $\ell_s=p\wedge \neg q$.
\end{Ex}

\begin{Def}[Scheduling]
A \emph{scheduling} for a subset of events $U\subseteq V$ is a function $\phi:U\rightarrow\RR$ that assigns a real number to each
event in $U$. The set of all schedules over $U$ is denoted~$\Phi_U$. 
\end{Def}
\begin{Def}[Scenario Restriction]
Let $s\in\Sigma_{P}$ be a scenario. 
The \emph{restriction} of $V$ and $A$ \wrt $s$ are defined as follows:
\begin{itemize} 
\item $V^+_s\triangleq \{v\in V\mid s(L(v))=\top\}$; 
\item $A^+_s\triangleq \{\langle u,v,w\rangle \mid \exists {\ell}\, \langle v-u\leq w, \ell\rangle \in A, s(\ell)=\top\}$.
\end{itemize}

The restriction of $\Gamma$ \wrt $s$ is defined as $\Gamma^+_s\triangleq \langle V^+_s, A^+_s\rangle$. 
Finally, it is worth to introduce the notation $V^+_{s_1, s_2} \triangleq V^+_{s_1}\cap V^+_{s_2}$.
\end{Def}
We remark that the restriction $\Gamma^+_s$ is always an \STN. 

\begin{Def}[Execution Strategy]\label{def:executionstrategy}
An \emph{execution strategy} for $\Gamma$ is a mapping $\sigma:\Sigma_P\rightarrow \Phi_V$ such that, 
for any scenario $s\in\Sigma_P$, the domain of the scheduling $\sigma(s)$ is $V^+_{s}$. 
The set of execution strategies of $\Gamma$ is denoted by $\S_{\Gamma}$.
The \emph{execution time} of an event $v\in V^+_{s}$ in the schedule $\sigma(s)\in\Phi_{V^+_s}$ is denoted~by~$[\sigma(s)]_v$.
\end{Def}
\begin{Def}[Scenario History]\label{def:scenario_history}
Let $\sigma\in\S_{\Gamma}$ be an execution strategy, let $s\in\Sigma_P$ be a scenario and let $v\in V^+_{s}$ be an event.
The \emph{scenario history} $\text{scHst}(v,s,\sigma)$ of $v$ in the scenario $s$ for the strategy $\sigma$ is defined as:
$\text{scHst}(v,s,\sigma)\triangleq \{(p, s(p))\mid  p\in P,\, \Ord_p \in V^+_{s}\cap{\Ord}V,\, [\sigma(s)]_{\Ord_p} < [\sigma(s)]_v \}$.

\end{Def}
The scenario history can be compactly expressed by the conjunction 
of the literals corresponding to the observations comprising it. 
Thus, we may treat a scenario history as though it were a label.
\begin{Def}[Viable Execution Strategy]
We say that $\sigma\in\S_{\Gamma}$ is a \emph{viable} execution strategy if, for each scenario $s\in\Sigma_P$, 
the scheduling $\sigma(s)\in\Phi_V$ is feasible for the \STN $\Gamma^+_s$.
\end{Def}
\begin{Def}[Dynamic Consistency]\label{def:consistency}
An execution strategy $\sigma\in \S_{\Gamma}$ is called \emph{dynamic} if, 
for any $s_1, s_2\in \Sigma_P$ and any event $v\in V^+_{s_1, s_2}$, 
the following implication holds:
\[\Con(\scHst(v, s_1, \sigma), s_2) \Rightarrow [\sigma(s_1)]_v = [\sigma(s_2)]_v.\]
We say that $\Gamma$ is \emph{dynamically-consistent} 
if it admits $\sigma\in\S_{\Gamma}$ which is both viable and dynamic.
The problem of checking whether a given \CSTN is dynamically-consistent is named \emph{\DCC}.
\end{Def}

\begin{Ex}
Consider the \CSTN $\Gamma$ of \figref{FIG:cstn2}, 
and let the scenarios $s_1, s_2, s_3, s_4$ be defined as:
$s_1(p)=\bot$, $s_1(q)=\bot$; $s_2(p)=\bot$, $s_2(q)=\top$; 
$s_3(p)=\top$, $s_3(q)=\bot$; $s_4(p)=\top$, $s_4(q)=\top$. 
It follows an example of execution strategy $\sigma\in\S_\Gamma$:
$[\sigma(s_i)]_A=0$ for every $i\in\{1,2,3,4\}$;
$[\sigma(s_i)]_B=8$ for every $i\in\{1,2,4\}$ and $[\sigma(s_3)]_B=3$;
$[\sigma(s_i)]_C=10$ for every $i\in\{1,2,3,4\}$;
$[\sigma(s_i)]_{\Ord_p}=1$ for every $i\in\{1,2,3,4\}$.
$[\sigma(s_i)]_{\Ord_q}=2$ for every $i\in\{3,4\}$ 
and $[\sigma(s_i)]_{\Ord_q}=9$ for every $i\in\{1,2\}$.
The reader can check that $\sigma$ is viable and dynamic. 
Indeed, $\sigma$ admits the tree-like representation 
depicted in Fig~\ref{FIG:cstn2-strategy}.
\end{Ex}

\begin{figure}[!htb]
\centering
\begin{tikzpicture}[scale=0.9, level distance=50pt,sibling distance=30pt]
\Tree [. \framebox{$\phi(A)=0$}
\edge node[]{}; 
[. \framebox{$\phi(\Ord_p)=1$}
\edge node[left,xshift=-1ex]{$s(p)=\top$}; 
[. \framebox{$\phi(\Ord_q)=2$} 
\edge node[left,xshift=-1ex]{$s(q)=\top$}; 
[. \framebox{$\phi(B)=8$}
\edge node[]{}; \framebox{$\phi(C)=10$}
]
\edge node[right,xshift=1ex]{$s(q)=\bot$}; 
[. \framebox{$\phi(B)=3$}
\edge node[]{}; \framebox{$\phi(C)=10$} ] 
]
\edge node[right,xshift=1ex]{$s(p)=\bot$}; 
[. \framebox{$\phi(B)=8$}
\edge node[]{}; [.
 \framebox{$\phi(\Ord_q)=9$}
 \edge node[right]{$s(q)=\top,\bot$}; [. \framebox{$\phi(C)=10$} ] ]
 ]
]]
\end{tikzpicture}
\caption{A tree-like representation of a dynamic execution strategy $\sigma$ for the \CSTN $\Gamma$ of \figref{FIG:cstn2}, 
where $s$ denotes scenarios and $\phi(X)=[\sigma(s)]_X$ denotes the corresponding scheduling.}
\label{FIG:cstn2-strategy}
\end{figure}

We introduce next a crucial notion for studying dynamic-consistency of \CSTN{s}, 
that is the \emph{difference set} $\Delta(s_1; s_2)$.
\begin{Def}[Difference Set]
Let $s_1, s_2\in\Sigma_P$ be two scenarios. 
The set of observation events in $V^+_{s_1}\cap{\Ord}V$ at which $s_1$ and $s_2$ differ is denoted by $\Delta(s_1;s_2)$. 
Formally, 
\[\Delta(s_1; s_2)\triangleq\{\Ord_p \in V^+_{s_1}\cap{\Ord}V \mid s_1(p)\neq s_2(p)\}.\]
\end{Def}
Notice that commutativity may not hold, \ie in general it may be the case that $\Delta(s_1; s_2)\neq \Delta(s_2; s_1)$.
\begin{Ex}
Consider the \CSTN $\Gamma$ of \figref{FIG:cstn2} and the scenarios $s_1, s_2$ defined:
$s_1(p)=\bot$, $s_1(q)=\bot$; $s_2(p)=\bot$, $s_2(q)=\top$.

Then, $\Delta(s_1;s_2)=\{\Ord_q\}$. 
\end{Ex}
The next lemma will be useful later on in Section~\ref{sect:Algo}.
\begin{Lem}\label{lem:dynamicimplequality}
Let $s_1, s_2\in \Sigma_P$ and  $v\in V^+_{s_1, s_2}$.
Let $\sigma\in\S_{\Gamma}$ be an execution strategy. 
Then, $\sigma$ is dynamic if and only if the following implication holds for every $s_1, s_2\in\Sigma_P$, $u\in V^+_{s_1, s_2}$:
\[\Big(\bigwedge_{v\in\Delta(s_1; s_2)} [\sigma(s_1)]_u \leq [\sigma(s_1)]_v\Big) 
\Rightarrow [\sigma(s_1)]_u = [\sigma(s_2)]_u\;\; (L\ref{lem:dynamicimplequality})\] 
\end{Lem}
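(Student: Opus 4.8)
The plan is to prove the two directions separately, and to exploit the fact that the new condition $(L\ref{lem:dynamicimplequality})$ only differs from the defining implication of Definition~\ref{def:consistency} in its antecedent: where Definition~\ref{def:consistency} uses $\Con(\scHst(v,s_1,\sigma),s_2)$, the lemma uses the purely order-theoretic condition $\bigwedge_{v\in\Delta(s_1;s_2)}[\sigma(s_1)]_u\leq[\sigma(s_1)]_v$. So the entire argument reduces to showing that, for fixed $\sigma$, $s_1$, $s_2$ and $u\in V^+_{s_1,s_2}$, the antecedent $\Con(\scHst(u,s_1,\sigma),s_2)$ holds \emph{if and only if} every observation event in $\Delta(s_1;s_2)$ is scheduled (under $s_1$) no earlier than $u$. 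Granting this equivalence of antecedents, the biconditional between the two universally-quantified implications is immediate.

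First I would unfold $\scHst(u,s_1,\sigma)$ by Definition~\ref{def:scenario_history}: it is the label collecting the literals $(p,s_1(p))$ for exactly those $p$ with $\Ord_p\in V^+_{s_1}\cap\Ord V$ and $[\sigma(s_1)]_{\Ord_p}<[\sigma(s_1)]_u$. Since $\scHst(u,s_1,\sigma)$ is a conjunction of literals over $P$ that is consistent with $s_1$ by construction, it is inconsistent with $s_2$ precisely when $s_1$ and $s_2$ disagree on at least one proposition $p$ appearing in $\scHst(u,s_1,\sigma)$ — that is, when there exists $p$ with $\Ord_p\in V^+_{s_1}\cap\Ord V$, $s_1(p)\neq s_2(p)$, and $[\sigma(s_1)]_{\Ord_p}<[\sigma(s_1)]_u$. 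By Definition of the difference set, the conditions $\Ord_p\in V^+_{s_1}\cap\Ord V$ and $s_1(p)\neq s_2(p)$ together say exactly $\Ord_p\in\Delta(s_1;s_2)$. Hence $\neg\Con(\scHst(u,s_1,\sigma),s_2)$ is equivalent to: there exists $\Ord_p\in\Delta(s_1;s_2)$ with $[\sigma(s_1)]_{\Ord_p}<[\sigma(s_1)]_u$; contrapositively, $\Con(\scHst(u,s_1,\sigma),s_2)$ holds iff $[\sigma(s_1)]_u\leq[\sigma(s_1)]_{\Ord_p}$ for every $\Ord_p\in\Delta(s_1;s_2)$, which is exactly the antecedent of $(L\ref{lem:dynamicimplequality})$.

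With the antecedents identified, I would close as follows. For the forward direction, assume $\sigma$ is dynamic; fix $s_1,s_2,u$ and suppose the antecedent of $(L\ref{lem:dynamicimplequality})$ holds; by the equivalence just established, $\Con(\scHst(u,s_1,\sigma),s_2)$ holds, so Definition~\ref{def:consistency} gives $[\sigma(s_1)]_u=[\sigma(s_2)]_u$. The converse is symmetric: assuming $(L\ref{lem:dynamicimplequality})$ for all triples, given $\Con(\scHst(v,s_1,\sigma),s_2)$ one reads off the order condition from the same equivalence and concludes $[\sigma(s_1)]_v=[\sigma(s_2)]_v$. I would also note at the outset that $V^+_{s_1,s_2}$ is nonempty-relevant only in that it is the domain on which both $\sigma(s_1)$ and $\sigma(s_2)$ are defined, so the equalities make sense.

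The only subtle point — and the one place the argument requires a little care rather than pure unwinding — is the claim that a conjunction $\ell$ of literals which is consistent with $s_1$ fails $\Con(\ell,s_2)$ exactly when $s_1$ and $s_2$ disagree on some literal of $\ell$; one must make sure $\scHst$ really is used here as a conjunction of literals (the paper licenses this just after Definition~\ref{def:scenario_history}) and that $\Con(\ell_1,\ell_2)$ means satisfiability of $\ell_1\wedge\ell_2$. Everything else is bookkeeping with the definitions. I expect the write-up to be short.
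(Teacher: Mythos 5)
Your proposal is correct and follows essentially the same route as the paper's own proof: both reduce the lemma to the single observation that $\Con(\scHst(u,s_1,\sigma),s_2)$ holds if and only if no observation event of $\Delta(s_1;s_2)$ is scheduled by $\sigma(s_1)$ strictly before $u$, and then conclude directly from Definition~\ref{def:consistency}. Your write-up merely spells out in more detail the unfolding of the definitions of $\Con$ and $\scHst$ that the paper compresses into one sentence.
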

\begin{proof}
Notice that, by definition of $\Con(\cdot, \cdot)$ and $\scHst(\cdot, \cdot, \cdot)$, $\Con(\scHst(u, s_1, \sigma), s_2)$ holds
if and only if there is no observation event $v\in\Delta(s_1; s_2)$ which is scheduled by $\sigma(s_1)$ strictly before $u$. 
Therefore, $\Con(\scHst(u, s_1, \sigma), s_2)$ holds if and only if $\bigwedge_{v\in\Delta(s_1; s_2)} [\sigma(s_1)]_u \leq [\sigma(s_1)]_v$.
Thus, by definition of dynamic execution strategy, the thesis follows.
\end{proof}

\subsection{Hyper Temporal Networks}\label{subsect:HTN}

This subsection surveys the \textit{Hyper Temporal Network} (\HTN) model, 
which is a strict generalization of \STN{s}. The reader is referred to~\cite{CPR2014, CPR2015} for an in-depth treatise on \HTN{s}.

\begin{Def}[Hypergraph]
A \emph{hypergraph} $\H$ is a pair $\langle V,\A\rangle$, where $V$ is the set of nodes, and $\A$ is the set of \emph{hyperarcs}.
Each hyperarc $A=\langle t_A, H_A, w_A\rangle\in \A$ has a distinguished node $t_A$, called the \emph{tail} of $A$, and a nonempty set
$H_A\subseteq V\setminus\{t_A\}$ containing the \emph{heads} of $A$; 
to each head $v\in H_A$ is associated a \emph{weight} $w_A(v)\in\Z$. 
\end{Def}

Provided that $|A| \triangleq |H_A\cup \{t_A\}|$, the \emph{size} of a hypergraph $\H = \langle V,\A\rangle$ is defined as $m_{\A}\triangleq \sum_{A\in\A}|A|$, 
and it is used as a measure for the encoding length of $\H$.
If $|A|=2$, then $A=\langle u, v, w \rangle$ can be regarded as a \emph{standard arc}.
In this way, hypergraphs generalize graphs.

A \HTN is a weighted hypergraph $\H=\langle V,\A\rangle$ where a node represents an \emph{event} to be scheduled, 
and a hyperarc represents a set of temporal distance \emph{constraints} between the \emph{tail} and the \emph{heads}, 

In the \HTN framework the consistency problem is defined to be the following decision problem.
\begin{Def}[\HTNC] 
Given a \HTN \mbox{$\H=\langle V,\A\rangle$}, 
decide whether there exists a scheduling function \mbox{$\phi:V \rightarrow \RR$} such that:
\[\phi(t_A) \geq \min_{v\in H_A} \phi(v) - w_A(v), \;\forall\; A\in\A\]
any such scheduling \mbox{$\phi:V \rightarrow \RR$} is called \textit{feasible}.

A \HTN is called \textit{consistent} whenever it admits at least one feasible scheduling.
The problem of checking whether a given \HTN is consistent is named \HTNC.
\end{Def}

Indeed, observe that \HTNC generalizes \STN-Consistency because an \STN may be viewed as a \HTN.
The converse is not true because feasible schedules for a \HTN do not need to form a convex 
polytope~\cite{CPR2014, CPR2015} whereas, in general, the feasible schedules of an \STN are the solutions 
of a linear system and, therefore, they form a convex polytope.

The computational equivalence between checking the consistency of {\HTN}s 
and determining the winning regions of {\MPG}s was pointed out in~\cite{CPR2014, CPR2015}. 
The tightest worst-case time complexity for solving \HTNC is expressed by the following theorem, 
which was proven by resorting to the Value Iteration Algorithm for \MPG{s}~\cite{brim2011faster}.
The approach was shown to be robust by experimental evaluations in~\cite{CPR2015, BC12}, 
where \HTN{s} of size $\sim 10^6$ were solved within $\sim 5$ sec.

\begin{Thm}{\cite{CPR2014}}\label{Teo:MainAlgorithms}
The following propositions hold on \HTN{s}. 
\begin{enumerate}
\item There exists an $O((|V|+|\A|) m_{\A} W)$ pseudo-polynomial time algorithm for checking \HTNC;
\item \label{Cor:PseudoPolyScheduling}
There exists an $O((|V|+|\A|) m_{\A} W)$ pseudo-polynomial time algorithm such that, 
given in input any consistent \HTN $\H=(V, \A)$, then it returns as output a feasible scheduling $\phi:V\rightarrow \RR$ of $\H$;
\end{enumerate}
Here, $W\triangleq \max_{A\in\A, v\in H_A} |w_A(v)|$.
\end{Thm}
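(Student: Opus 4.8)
The plan is to route through the computational equivalence between \HTNC and Mean Payoff Games recalled above, and to read off the complexity from a known pseudo-polynomial \MPG solver.

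First I would reduce an arbitrary \HTN $\H=\langle V,\A\rangle$ to a two-player game graph $G_\H$. Its vertices are $V$ together with one fresh vertex $a_A$ per hyperarc $A\in\A$. Every $v\in V$ is owned by the \emph{scheduler} and, for each hyperarc $A$ with $t_A=v$, carries a weight-$0$ arc to $a_A$; every $a_A$ is owned by the \emph{adversary} and carries, for each head $v'\in H_A$, an arc to $v'$ whose weight encodes $w_A(v')$, the sign being chosen so that a play that never drives the accumulated weight to $-\infty$ (equivalently, keeps a suitable ``energy'' non-negative) corresponds exactly to a cascade of satisfied hyper-constraints $\phi(t_A)\ge\phi(v')-w_A(v')$. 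The core claim to verify, in both directions, is that $\H$ is consistent if and only if the scheduler wins from every vertex of $G_\H$: a feasible scheduling of $\H$ yields a winning energy-progress measure, and conversely a winning strategy yields, via its associated progress measure, a feasible (and integral) scheduling, exactly as shortest-path potentials are lifted from \STN{s} to the game setting.

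Next I would do the size bookkeeping. The graph $G_\H$ has $|V|+|\A|$ vertices, and its arcs number $\sum_{A\in\A}\big(1+|H_A|\big)=\sum_{A\in\A}|A|=m_{\A}$, while its largest absolute weight is $W$. Applying the Value Iteration (energy-game) Algorithm of~\cite{brim2011faster} --- which solves a game with $n$ vertices, $m$ arcs and largest weight $W$ in time $O(n\,m\,W)$ --- to $G_\H$ gives running time $O\big((|V|+|\A|)\,m_{\A}\,W\big)$, which is the first claim. For the second claim it suffices to recall that, on an instance where the scheduler wins everywhere, this algorithm in fact outputs the pointwise-minimal energy-progress measure $f$, a function on the vertices of $G_\H$ all of whose values are $O\big((|V|+|\A|)W\big)$; restricting $f$ to $V$ and undoing the sign/shift convention of the reduction turns it into a function $\phi:V\to\Z$ with $\phi(t_A)\ge\min_{v\in H_A}\phi(v)-w_A(v)$ for all $A\in\A$, i.e.\ a feasible scheduling, still within time $O\big((|V|+|\A|)\,m_{\A}\,W\big)$.

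The main obstacle is the correctness of the reduction: getting the arc-weight signs and inequality directions right so that the $\min$ over heads is genuinely the adversary's choice, and proving the quantitative statement that the scheduler, if it wins, can do so while keeping energy bounded by $O\big((|V|+|\A|)W\big)$ --- this is exactly what makes the Value Iteration terminate within the claimed bound and what guarantees that the extracted $\phi$ is a bona fide finite integral scheduling. Both soundness (losing somewhere forces infeasibility) and completeness (winning everywhere yields a finite progress measure, hence $\phi$) must be checked carefully against the definition of \HTNC; the rest is routine bookkeeping.
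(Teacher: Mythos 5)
Your overall route is the same one the paper relies on: this theorem is not proved in the paper at all, but imported from~\cite{CPR2014}, and the proof there is exactly a reduction of \HTNC to a mean payoff / energy game on the bipartite graph with vertex set $V\cup\A$, followed by the $O(nmW)$ Value Iteration of~\cite{brim2011faster}; your size bookkeeping ($|V|+|\A|$ vertices, $\sum_{A\in\A}(1+|H_A|)=m_{\A}$ arcs, progress-measure values bounded by $O((|V|+|\A|)W)$, scheduling read off from the least energy progress measure) matches that construction.

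There is, however, one concrete error in your reduction: the player ownership is reversed. The feasibility condition $\phi(t_A)\ge\min_{v\in H_A}\bigl(\phi(v)-w_A(v)\bigr)$ for \emph{all} $A\in\A$ unfolds as ``for every hyperarc $A$ with tail $u$ there \emph{exists} a head $v\in H_A$ with $\phi(u)\ge\phi(v)-w_A(v)$.'' The universal quantifier ranges over the hyperarcs leaving $u$, so the \emph{adversary} must own the event nodes $u\in V$ (it chooses which constraint to challenge), while the existential quantifier ranges over the heads, so the \emph{scheduler} must own the hyperarc nodes $a_A$ (it chooses the head certifying the constraint); this is what makes the $\min$ over $H_A$ come out right in the progress-measure inequalities $f(u)\ge f(a_A)-0$ (for all $A$ with $t_A=u$) and $f(a_A)\ge f(v)-w_A(v)$ (for some $v\in H_A$). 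You assign $V$ to the scheduler and the $a_A$ to the adversary, which encodes ``there exists a hyperarc all of whose head-constraints hold'' --- a different predicate --- and your own statement of the ``main obstacle'' resolves it the wrong way by insisting that the $\min$ over heads be the adversary's choice. With the ownership swapped, the rest of your argument (equivalence of everywhere-winning with consistency, extraction of an integral $\phi$ from the finite progress measure, and the $O((|V|+|\A|)\,m_{\A}\,W)$ running time) goes through as in~\cite{CPR2014}.
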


\section{Algorithmics of Dynamic-Consistency}\label{sect:Algo}

To start with, we offer the following $\coNP$-hardness lower bound on \DCC.
\begin{Thm} 
\DCC is $\coNP$-hard.
\end{Thm}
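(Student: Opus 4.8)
The plan is to establish $\coNP$-hardness of \DCC by a polynomial-time reduction from a canonical $\coNP$-complete problem, namely the complement of $3$-SAT (equivalently, I would reduce \emph{from} \textsc{Tautology} or \textsc{Validity}, or show that $3$-SAT reduces to the complement of \DCC). Concretely, given a $3$-CNF formula $\varphi$ over propositional variables $p_1,\dots,p_n$ with clauses $C_1,\dots,C_m$, I will construct in polynomial time a \CSTN $\Gamma_\varphi$ such that $\Gamma_\varphi$ is \emph{not} dynamically-consistent if and only if $\varphi$ is satisfiable; since $3$-SAT is $\NP$-complete, this yields that \DCC is $\coNP$-hard.

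The construction would exploit the two sources of combinatorial branching available in a \CSTN: the $2^n$ scenarios induced by the boolean variables $P=\{p_1,\dots,p_n\}$ (with one observation event $\Ord_{p_i}$ per variable), and the interaction between labels on constraints and the dynamic-consistency requirement of Definition~\ref{def:consistency}. The idea is to let a scenario $s\in\Sigma_P$ encode a truth assignment to $\varphi$, and to install, for each clause $C_j$, a small gadget of labeled temporal constraints whose labels are exactly the three literals of $C_j$ (or their negations), so that in a scenario $s$ the constraints that become ``required'' in the restriction $\Gamma^+_s$ are precisely those associated with clauses \emph{violated} by $s$. By routing all these clause-gadgets through a common pair of tightly-coupled events (an STN-style rigid link $u-v\le 0$, $v-u\le 0$ forcing $u$ and $v$ to be co-scheduled), I can arrange that the restriction $\Gamma^+_s$ is STN-consistent for \emph{every} scenario $s$ — so viability is never the obstacle — while a dynamic strategy is forced to commit, before knowing the observation outcomes, to execution times that are simultaneously feasible across scenarios that share a common history. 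The pigeonhole-style tension is then: a single dynamic strategy must schedule the shared ``skeleton'' events in a way compatible with all $2^n$ restrictions, and I will make the clause gadgets impose contradictory demands on the skeleton exactly when \emph{some} assignment satisfies \emph{all} clauses, i.e.\ exactly when $\varphi$ is satisfiable. (Dually, if $\varphi$ is unsatisfiable, every assignment violates some clause, which one can leverage to exhibit an explicit viable-and-dynamic $\sigma$.) One must be careful with the WD1--WD3 reasonability assumptions: every constraint label must subsume the labels of its endpoints, and by WD2 each event whose label mentions $p$ must be forced strictly after $\Ord_p$; keeping most event labels equal to $\lambda$ and only putting nontrivial labels on a handful of ``guarded'' terminal events keeps the gadget well-defined.

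The key steps, in order, would be: (1) fix the target $\coNP$-complete problem and the direction of the reduction, stating the claimed equivalence ``$\Gamma_\varphi\notin\DCC \iff \varphi\in 3\text{-SAT}$''; (2) describe the skeleton of $\Gamma_\varphi$ — the observation events $\Ord_{p_1},\dots,\Ord_{p_n}$, a rigid reference pair, and timing constraints ensuring all $\Ord_{p_i}$ can be ordered freely relative to one another and relative to the guarded events, so that histories can encode partial assignments; (3) describe the clause gadget for $C_j$, giving the constraints, their integer weights, and their labels built from the literals of $C_j$, and verify WD1--WD3 for the gadget; (4) prove soundness: if $\varphi$ is satisfiable via assignment $s^\star$, show that no $\sigma\in\S_{\Gamma_\varphi}$ can be both viable and dynamic, by tracing the forced equalities of Lemma~\ref{lem:dynamicimplequality} along the chain of scenarios agreeing with $s^\star$ up to successive observations and deriving a violated temporal constraint; (5) prove completeness: if $\varphi$ is unsatisfiable, build a viable-and-dynamic $\sigma$ explicitly, scheduling each scenario's events using, for each $s$, a clause $C_{j(s)}$ that $s$ violates to ``slacken'' the gadget, and check dynamicity via Lemma~\ref{lem:dynamicimplequality}; (6) observe the construction is polynomial in $n+m$. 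The main obstacle I anticipate is step (4)/(5): making the clause gadgets interact through the shared skeleton so that the contradiction arises \emph{exactly} on satisfiable formulas and nowhere else — in particular ensuring that scenarios violating different clauses do not accidentally also create an unavoidable conflict, which would make $\Gamma_\varphi$ non-DC even for unsatisfiable $\varphi$. Getting the integer weights and the strict-inequality offsets (the $\epsilon$'s forced by WD2) to line up so that the forced-equality chain closes up into an infeasible cycle precisely in the satisfiable case is the delicate heart of the argument.
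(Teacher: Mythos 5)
Your reduction goes in the right direction (from $3$-\SAT to the complement of \DCC, with ``$\Gamma_\varphi$ not dynamically-consistent iff $\varphi$ satisfiable''), and your handling of WD1--WD3 (empty node labels, literals only on constraint labels) is sound. But there is a genuine gap: the clause gadget, which you yourself identify as ``the delicate heart of the argument,'' is never constructed. Worse, the design constraints you impose on yourself make the missing step harder than it needs to be. You insist that every restriction $\Gamma^+_s$ be STN-consistent and that the contradiction in the satisfiable case emerge only from the dynamicity requirement, via chains of forced equalities across scenarios sharing a history (Lemma~\ref{lem:dynamicimplequality}). That forces you to reason about all $2^n$ scenarios simultaneously and to control exactly which cross-scenario equalities are triggered --- precisely the part you leave open --- and you also invert the natural labeling (activating gadgets on \emph{violated} clauses), which makes it unclear how a \emph{satisfying} assignment, under which no gadget is active, could produce any tension at all.

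The paper's construction shows none of this machinery is needed: viability alone can be made to fail. Take $V^\varphi = X\cup\C$ with all node labels $\lambda$, make every variable an observation event, force all observation events to be co-scheduled (arcs $\langle u-v\leq 0,\lambda\rangle$) and every clause event to occur strictly after them (arcs $\langle x - C\leq -1,\lambda\rangle$), and arrange the clause events in a directed cycle where the arc $\langle C_j - C_{(j+1)\bmod m}\leq -1,\ell\rangle$ carries label $\ell$ for each literal $\ell\in C_j$ --- i.e., a clause's outgoing arc is active exactly when the clause is \emph{satisfied}. A satisfying scenario then activates one negative-weight arc out of every clause event, closing a negative cycle, so the restriction is an inconsistent \STN and no viable strategy exists. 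If $\varphi$ is unsatisfiable, every scenario breaks the cycle at some falsified clause $C_{j_\nu}$, and the planner --- who learns the full scenario immediately after time $0$ since all observations are co-scheduled --- schedules the clauses in the order $C_{j_\nu+1},\dots$ around the broken cycle; this strategy is easily checked to be viable and dynamic. I would recommend replacing your open-ended gadget with a construction of this kind, where the satisfiable case is killed by a per-scenario negative cycle rather than by cross-scenario equality chains.
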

\begin{proof} 
We reduce $3$-\SAT\, to the complement of \DCC. Let $\varphi$ be a boolean formula in 3CNF. 
Let $X$ be the set of variables and let $\C=\{C_0, \ldots, C_{m-1}\}$ 
be the set of clauses comprising $\varphi = \bigwedge_{j=0}^{m-1} C_j$. 

(1) Let $N^\varphi$ be the \CSTN $\langle V^\varphi, A^\varphi, L^\varphi, \Ord^\varphi, \Ord{V}^\varphi, P^\varphi \rangle$, 
where: $V^{\varphi}\triangleq X\cup \C $, and all the nodes are given empty label, \ie $L^{\varphi}(v)=\lambda$ for every $v\in V^{\varphi}$; 
$P^{\varphi}\triangleq \Ord{V^{\varphi}} \triangleq X$; $\Ord^{\varphi}$ is the identity function;
for every $u,v\in\Ord{V^{\varphi}}$ we have $\langle u-v\leq 0, \lambda\rangle \in A^{\varphi}$; 
for every $x\in X$ and $C\in \C$ we have $\langle x-C\leq -1, \lambda\rangle \in A^{\varphi}$;
for each $j=0,\ldots, m-1$ and for each literal $\ell\in C_j$, 
we have $\langle C_j-C_{(j+1)\text{mod } m}\leq -1, \ell\rangle \in \A_\varphi$. 
Notice that $|V^{\varphi}| = n+m$ and $|A^{\varphi}|=n^2+nm+3m$.

(2) Assume that $\varphi$ is satisfiable. Let 
$\nu$ be a satisfying truth-assignment of $\varphi$.
In order to prove that $N^\varphi$ is not dynamically-consistent, 
observe that the restriction of $N^\varphi$ \wrt the scenario $\nu$ is a non-consistent \STN.
Indeed, if for every $j=0, \ldots, m-1$ we pick a standard arc $\langle C_j-C_{(j+1)\text{mod } m}\leq-1,\ell_j\rangle$ 
with $\ell_j$ being a literal in $C_j$ such that $\nu(\ell_j)=\top$, then we obtain a negative circuit.

(3) Assume that $\varphi$ is unsatisfiable. In order to prove that $N^{\varphi}$ is dynamically-consistent, 
we exhibit a viable and dynamic execution strategy $\sigma$ for $N^{\varphi}$.
First, schedule every $x\in X$ at $\sigma(x)\triangleq 0$. 
Therefore, by time $1$, the planner has full knowledge of the observed scenario $\nu$.
Since $\varphi$ is unsatisfiable, there exists an index $j_{\nu}$ such that $\nu(C_{j_\nu})=\bot$.
At this point, set $\sigma(C_{(j_{\nu}+k)\text{mod }m})\triangleq k$ for $k=1, \ldots, m$.
The reader can verify that $\sigma$ is viable and dynamic for $N^{\varphi}$.
\end{proof}

It remains currently open whether \DCC lies in {\PSPACE} and whether it is \PSPACE-hard.

\subsection{$\epsilon$-Dynamic-Consistency}
In \CSTN{s}, decisions about the precise timing of actions are postponed until execution time, 
when informations meanwhile gathered at the observation nodes can be taken into account.
However, the planner is allowed to factor in an outcome, and differentiate its strategy according to it, 
only strictly after the outcome has been observed (whence the strict inequality in Definition~\ref{def:scenario_history}).
Notice that this definition does not take into account the reaction time, which, in most applications, is non-negligible. 
In order to deliver algorithms that can also deal with the \emph{reaction time} $\epsilon$ of the planner, 
we employ a refined notion of dynamic-consistency.

\begin{Def}[$\epsilon$-dynamic-consistency]\label{def:epsilonconsistency}
Given any \CSTN $\langle V, A, L, \Ord, \Ord{V}, P \rangle$ and any real number $\epsilon\in (0, +\infty)$, 
an execution strategy $\sigma\in\S_{\Gamma}$ is \emph{$\epsilon$-dynamic} if it satisfies all the $H_\epsilon\text{-constraints}$,  
namely, for any two scenarios $s_1, s_2\in \Sigma_P$ and any event $u\in V^+_{s_1, s_2}$, 
the execution strategy $\sigma$ satisfies the following constraint, which is denoted $H_{\epsilon}(s_1;s_2;u)$:
\[
[\sigma(s_1)]_u \geq \min\Big(\{[\sigma(s_2)]_u\}\cup\{[\sigma(s_1)]_v + \epsilon\mid v\in\Delta(s_1; s_2)\}\Big)
\]
We say that a \CSTN $\Gamma$ is \emph{$\epsilon$-dynamically-consistent} if it admits $\sigma\in\S_{\Gamma}$ 
which is both viable and $\epsilon$-dynamic. 
The problem of checking whether a given \CSTN is $\epsilon$-dynamically-consistent is named \emph{\eDCC}.
\end{Def}

It follows directly from Definition~\ref{def:epsilonconsistency} that,  
whenever $\sigma\in S_{\Gamma}$ satisfies some $H_\epsilon(s_1;s_2;u)$, then $\sigma$ satisfies $H_{\epsilon'}(s_1;s_2;u)$ 
for every $\epsilon’\in(0, \epsilon]$ as well. This proves the following lemma.
\begin{Lem}
If $\Gamma$ is $\epsilon$-dynamically-consistent, for some $\epsilon>0$, 
then $\Gamma$ is $\epsilon'$-dynamically-consistent for every $\epsilon'\in (0, \epsilon]$.
\end{Lem}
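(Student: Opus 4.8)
The statement to prove is: \emph{If $\Gamma$ is $\epsilon$-dynamically-consistent for some $\epsilon>0$, then $\Gamma$ is $\epsilon'$-dynamically-consistent for every $\epsilon'\in(0,\epsilon]$.}

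The plan is to fix a witness $\sigma\in\S_\Gamma$ that is viable and $\epsilon$-dynamic, fix an arbitrary $\epsilon'\in(0,\epsilon]$, and show that \emph{the very same} $\sigma$ is viable and $\epsilon'$-dynamic. Viability is untouched since it only refers to the STNs $\Gamma^+_s$ and does not mention the reaction time at all, so nothing needs to be checked there. The whole content is therefore to verify that $\sigma$ satisfies every constraint $H_{\epsilon'}(s_1;s_2;u)$, given that it satisfies every $H_{\epsilon}(s_1;s_2;u)$.

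First I would fix two scenarios $s_1,s_2\in\Sigma_P$ and an event $u\in V^+_{s_1,s_2}$, and compare the two right-hand sides. Writing $T_\epsilon(s_1;s_2;u)$ for the set $\{[\sigma(s_2)]_u\}\cup\{[\sigma(s_1)]_v+\epsilon\mid v\in\Delta(s_1;s_2)\}$ appearing in $H_\epsilon(s_1;s_2;u)$, the key observation is monotonicity in $\epsilon$: since $\epsilon'\le\epsilon$, for each $v\in\Delta(s_1;s_2)$ we have $[\sigma(s_1)]_v+\epsilon'\le[\sigma(s_1)]_v+\epsilon$, while the term $[\sigma(s_2)]_u$ is common to both sets. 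Hence, taking minima, $\min T_{\epsilon'}(s_1;s_2;u)\le \min T_{\epsilon}(s_1;s_2;u)$. Combining this with the hypothesis $[\sigma(s_1)]_u\ge \min T_{\epsilon}(s_1;s_2;u)$ gives $[\sigma(s_1)]_u\ge \min T_{\epsilon'}(s_1;s_2;u)$, which is exactly $H_{\epsilon'}(s_1;s_2;u)$. Since $s_1,s_2,u$ were arbitrary, $\sigma$ is $\epsilon'$-dynamic, and therefore $\Gamma$ is $\epsilon'$-dynamically-consistent.

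There is no real obstacle here: the argument is the one-line observation already sketched in the paragraph preceding the lemma, namely that satisfaction of $H_\epsilon$ implies satisfaction of $H_{\epsilon'}$ for $\epsilon'\in(0,\epsilon]$ with a \emph{fixed} strategy. The only point one must be slightly careful about is that the minimum is over a finite nonempty set (it always contains $[\sigma(s_2)]_u$), so ``$\min$'' is well defined and the pointwise inequality between the two sets indeed transfers to their minima; the set $\Delta(s_1;s_2)$ may be empty, but then both right-hand sides equal $[\sigma(s_2)]_u$ and the claim is immediate. No case analysis on whether the minimizer is $[\sigma(s_2)]_u$ or some $[\sigma(s_1)]_v+\epsilon$ term is needed, because the set-inclusion-of-shifted-elements argument handles all cases uniformly.
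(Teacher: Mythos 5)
Your proof is correct and takes essentially the same route as the paper, which disposes of this lemma with the one-line observation preceding it (satisfaction of $H_{\epsilon}(s_1;s_2;u)$ implies satisfaction of $H_{\epsilon'}(s_1;s_2;u)$ for $\epsilon'\in(0,\epsilon]$ with the same strategy $\sigma$). Your write-up merely makes the monotonicity of the minimum and the preservation of viability explicit, which is a faithful expansion of the paper's argument rather than a different one.
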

Given any dynamically-consistent \CSTN, 
we may ask for the maximum reaction time $\epsilon$ of the planner beyond which the network is no longer dynamically-consistent. 
\begin{Def}[Reaction time $\hat{\epsilon}$]
Let $\hat{\epsilon}=\hat{\epsilon}(\Gamma)$ be the greatest real number $\epsilon$ 
such that $\Gamma$ is $\epsilon$-dynamically-consistent. 
\end{Def}
If $\Gamma$ is dynamically-consistent, 
then $\hat{\epsilon}(\Gamma)$ exists finite and $\hat{\epsilon}(\Gamma)>0$, 
as it is now proved in Lemma~\ref{lem:dynamic_impl_epsilon}.
\begin{Lem}\label{lem:dynamic_impl_epsilon}
Let $\sigma$ be a dynamic execution strategy for the \CSTN $\Gamma$.
Then, there exists a sufficiently small real number $\epsilon\in (0, +\infty)$ such that $\sigma$ is $\epsilon$-dynamic.
\end{Lem}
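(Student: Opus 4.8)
The plan is to take a dynamic execution strategy $\sigma$ and extract, for each relevant triple $(s_1;s_2;u)$, the "slack" by which $\sigma$ satisfies the dynamic-consistency requirement, then set $\epsilon$ below the minimum of all these slacks. First I would appeal to Lemma~\ref{lem:dynamicimplequality}, which reformulates dynamicity of $\sigma$ as: for all $s_1, s_2 \in \Sigma_P$ and $u \in V^+_{s_1,s_2}$,
\[
\Big(\bigwedge_{v\in\Delta(s_1;s_2)} [\sigma(s_1)]_u \leq [\sigma(s_1)]_v\Big) \Rightarrow [\sigma(s_1)]_u = [\sigma(s_2)]_u.
\]
Fix such a triple $(s_1;s_2;u)$. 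I want to show $\sigma$ satisfies $H_\epsilon(s_1;s_2;u)$, i.e.\ $[\sigma(s_1)]_u \geq \min\big(\{[\sigma(s_2)]_u\}\cup\{[\sigma(s_1)]_v + \epsilon \mid v\in\Delta(s_1;s_2)\}\big)$, for $\epsilon$ small enough. The key step is a case split on whether the antecedent of the implication above holds at this triple.

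If the antecedent holds — that is, $[\sigma(s_1)]_u \leq [\sigma(s_1)]_v$ for every $v\in\Delta(s_1;s_2)$ — then dynamicity gives $[\sigma(s_1)]_u = [\sigma(s_2)]_u$, so the term $\{[\sigma(s_2)]_u\}$ already witnesses the minimum and $H_\epsilon(s_1;s_2;u)$ holds for every $\epsilon>0$. If the antecedent fails, then there exists some $v\in\Delta(s_1;s_2)$ with $[\sigma(s_1)]_u > [\sigma(s_1)]_v$, i.e.\ $[\sigma(s_1)]_u - [\sigma(s_1)]_v > 0$; call this strictly positive quantity the slack of the triple, and note $H_\epsilon(s_1;s_2;u)$ holds as soon as $\epsilon \leq [\sigma(s_1)]_u - [\sigma(s_1)]_v$ for that particular $v$. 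In either case there is a threshold $\epsilon_{s_1,s_2,u} > 0$ below which $H_\epsilon(s_1;s_2;u)$ is satisfied (in the first case any positive value works; in the second, the slack). Since $P$ is finite, $\Sigma_P$ is finite and $V$ is finite, there are only finitely many triples $(s_1;s_2;u)$, so setting $\epsilon \triangleq \min_{(s_1;s_2;u)} \epsilon_{s_1,s_2,u}$ yields a single $\epsilon \in (0,+\infty)$ for which $\sigma$ satisfies every $H_\epsilon$-constraint, hence is $\epsilon$-dynamic.

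The only mild subtlety — not really an obstacle — is ensuring the minimum over triples is strictly positive: this is exactly where finiteness of the index set is used, a minimum of finitely many strictly positive reals being strictly positive. I would also remark in passing that $\sigma$ remains viable (viability is independent of $\epsilon$), so this same $\epsilon$ witnesses that $\Gamma$ is $\epsilon$-dynamically-consistent, which is what is needed to conclude, together with the earlier monotonicity lemma, that $\hat\epsilon(\Gamma) > 0$.
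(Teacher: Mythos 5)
Your proposal is correct and follows essentially the same route as the paper: both reduce to the reformulation in Lemma~\ref{lem:dynamicimplequality}, split on whether the antecedent holds (the paper phrases this as the disjunction obtained via material implication and De Morgan), extract a strictly positive per-triple slack $[\sigma(s_1)]_u - [\sigma(s_1)]_v$ in the failing case, and take the minimum over the finitely many triples. No gaps.
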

\begin{proof} 
Let $s_1, s_2\in \Sigma_P$ be two scenarios and let us consider any event $u\in V^+_{s_1, s_2}$.
Since $\sigma$ is dynamic, then by Lemma~\ref{lem:dynamicimplequality} the following implication necessarily holds:
\[\Big( \bigwedge_{v\in\Delta(s_1; s_2)} [\sigma(s_1)]_u \leq [\sigma(s_1)]_v \Big) \Rightarrow [\sigma(s_1)]_u \geq [\sigma(s_2)]_u\;\; (*)\] 
Notice that, \wrt Lemma~\ref{lem:dynamicimplequality}, 
we have relaxed the equality $[\sigma(s_1)]_u = [\sigma(s_2)]_u$ in the implicand of (L\ref{lem:dynamicimplequality}) 
by introducing the inequality $[\sigma(s_1)]_u \geq [\sigma(s_2)]_u$.
At this point, we convert ($*$) from implicative to disjunctive form, 
first by applying the rule of material implication\footnote{The rule of material implication: $\models p\Rightarrow q \iff \neg p \vee q$}, 
and then De Morgan's law\footnote{De Morgan's law: $\models \neg (p\wedge q)\iff \neg p\vee \neg q$}. 
From this, we see that the following disjunction must hold: 
\[
 [\sigma(s_1)]_u \geq [\sigma(s_2)]_u 
\vee \bigvee_{v\in\Delta(s_1; s_2)} [\sigma(s_1)]_u > [\sigma(s_1)]_v\;\; (**) 
\]
Then, we argue that there exists a real number $\epsilon\in (0, +\infty)$ 
such that the following disjunction holds as well:
\[
[\sigma(s_1)]_u \geq [\sigma(s_2)]_u   
\vee \bigvee_{v\in\Delta(s_1; s_2)} [\sigma(s_1)]_u \geq [\sigma(s_1)]_v +\epsilon .
\]
In fact, since the disjunction ($**$) necessarily holds, then  
one can define $\epsilon$ to be the minimum among all the values $\epsilon(s_1;s_2;u)\in (0, +\infty)$ 
such that for every $\langle s_1, s_2, u\rangle\in \Sigma_P\times\Sigma_P\times V^+_{s_1, s_2}$ 
the following is satisfied:
$\epsilon(s_1;s_2;u) = 1$ if $[\sigma(s_1)]_u \geq [\sigma(s_2)]_u$; 
otherwise, $\epsilon(s_1;s_2;u) = \min\{[\sigma(s_1)]_u - 
[\sigma(s_1)]_v \mid v\in\Delta(s_1, s_2),\; [\sigma(s_1)]_u > [\sigma(s_1)]_v\}$.

This implies that $\sigma$ satisfies every $H_{\epsilon}$-constraint of $\Gamma$, 
and thus that $\sigma$ is $\epsilon$-dynamic.
\end{proof}
\begin{Lem}\label{lem:epsilon_impl_dynamic}
Let $\sigma$ be an $\epsilon$-dynamic execution strategy for the \CSTN $\Gamma$, 
for some $\epsilon\in (0, +\infty)$.
Then, $\sigma$ is dynamic.
\end{Lem}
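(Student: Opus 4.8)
The plan is to show that an $\epsilon$-dynamic strategy $\sigma$ satisfies the implication of Lemma~\ref{lem:dynamicimplequality}, since by that lemma this is equivalent to $\sigma$ being dynamic. So fix scenarios $s_1,s_2\in\Sigma_P$ and an event $u\in V^+_{s_1,s_2}$, and assume the antecedent $\bigwedge_{v\in\Delta(s_1;s_2)}[\sigma(s_1)]_u\le[\sigma(s_1)]_v$ holds; the goal is to derive $[\sigma(s_1)]_u=[\sigma(s_2)]_u$.

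First I would unpack the constraint $H_\epsilon(s_1;s_2;u)$, namely $[\sigma(s_1)]_u\ge\min\big(\{[\sigma(s_2)]_u\}\cup\{[\sigma(s_1)]_v+\epsilon\mid v\in\Delta(s_1;s_2)\}\big)$. Under the standing assumption, each term $[\sigma(s_1)]_v+\epsilon$ is strictly greater than $[\sigma(s_1)]_u$ (here is where the strict positivity $\epsilon>0$ is essential). Hence the minimum on the right-hand side cannot be attained at any of those terms without contradicting $H_\epsilon(s_1;s_2;u)$, so it must be attained at $[\sigma(s_2)]_u$, which gives $[\sigma(s_1)]_u\ge[\sigma(s_2)]_u$.

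To obtain the reverse inequality $[\sigma(s_1)]_u\le[\sigma(s_2)]_u$, I would invoke the symmetric reasoning by swapping the roles of $s_1$ and $s_2$. The subtlety is that the antecedent we assumed was about $\Delta(s_1;s_2)$ and the strategy values under $s_1$, whereas the swapped argument needs information about $\Delta(s_2;s_1)$ and the values under $s_2$. The key observation that makes this go through is that $\Delta(s_1;s_2)$ and $\Delta(s_2;s_1)$ consist of observation events that are scheduled no later than $u$ under $\sigma(s_1)$, and an $\epsilon$-dynamic strategy also satisfies the partner constraint $H_\epsilon(s_2;s_1;u)$; combining the scheduling-order information established in the first half (all of $\Delta(s_1;s_2)$ is scheduled $\ge[\sigma(s_1)]_u$, and by the just-derived equality-in-progress the relevant observation times agree across $s_1$ and $s_2$) with $H_\epsilon(s_2;s_1;u)$ yields $[\sigma(s_2)]_u\ge[\sigma(s_1)]_u$. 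Putting the two inequalities together gives $[\sigma(s_1)]_u=[\sigma(s_2)]_u$, which is the conclusion of Lemma~\ref{lem:dynamicimplequality}, hence $\sigma$ is dynamic.

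The main obstacle I anticipate is exactly this bookkeeping of the asymmetry between $\Delta(s_1;s_2)$ and $\Delta(s_2;s_1)$: one must argue carefully that under the antecedent, the set of observations "seen before $u$" is the same whether one computes it from $\sigma(s_1)$ or from $\sigma(s_2)$, so that the histories are mutually consistent and the swapped $H_\epsilon$-constraint can legitimately be applied. I expect the cleanest route is to phrase the whole argument through $\scHst$ and $\Con$ directly rather than through the $\Delta$-reformulation, mirroring the proof of Lemma~\ref{lem:dynamicimplequality}, so that the strictness introduced by $+\epsilon$ collapses neatly into the strict inequality in Definition~\ref{def:scenario_history}.
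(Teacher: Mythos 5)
The first half of your argument is sound and matches the paper: under the antecedent of Lemma~\ref{lem:dynamicimplequality}, every term $[\sigma(s_1)]_v+\epsilon$ with $v\in\Delta(s_1;s_2)$ strictly exceeds $[\sigma(s_1)]_u$, so $H_{\epsilon}(s_1;s_2;u)$ forces the minimum to be attained at $[\sigma(s_2)]_u$ and hence $[\sigma(s_1)]_u\geq[\sigma(s_2)]_u$. The gap is in the reverse inequality. Applying $H_{\epsilon}(s_2;s_1;u)$ only yields a dichotomy: either $[\sigma(s_2)]_u\geq[\sigma(s_1)]_u$ (done), or there is some $v\in\Delta(s_2;s_1)$ with $[\sigma(s_2)]_u\geq[\sigma(s_2)]_v+\epsilon$. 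Your antecedent constrains the schedule $\sigma(s_1)$, not $\sigma(s_2)$, so it does not exclude this second branch: such a $v$ may well satisfy $[\sigma(s_2)]_v<[\sigma(s_2)]_u\leq[\sigma(s_1)]_u\leq[\sigma(s_1)]_v$ without contradiction. Your fix --- that ``by the just-derived equality-in-progress the relevant observation times agree across $s_1$ and $s_2$'' --- is circular: the agreement of $[\sigma(s_1)]_v$ and $[\sigma(s_2)]_v$ is exactly an instance of the conclusion you are proving, applied to the event $v$ instead of $u$, and you have set up no induction that would make it available. One would have to recurse on $v$, and nothing in your proposal guarantees that this recursion is well-founded.

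The paper closes precisely this hole with a minimal-counterexample argument rather than a per-triple symmetric argument. It collects all triples $\langle u,s_1,s_2\rangle$ violating the implication into a set $F$, and chooses $\langle\hat{u},\hat{s_1}\rangle$ minimizing the scheduled time $[\sigma(\hat{s_1})]_{\hat{u}}$. Minimality supplies, for free, the inequality your symmetric step was trying to manufacture: $[\sigma(\hat{s_1})]_{\hat{u}}\leq[\sigma(s_2)]_{\hat{u}}$, hence (by condition (2) of membership in $F$) $[\sigma(\hat{s_1})]_{\hat{u}}<[\sigma(s_2)]_{\hat{u}}$. Then a \emph{single} application of $H_{\epsilon}(\hat{s_1};s_2;\hat{u})$ --- the same computation as your first half --- produces some $v\in\Delta(\hat{s_1};s_2)$ with $[\sigma(\hat{s_1})]_{\hat{u}}\geq[\sigma(\hat{s_1})]_v+\epsilon>[\sigma(\hat{s_1})]_v$, contradicting condition (1). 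So the correct proof needs only one of the two $H_\epsilon$-constraints, but it needs the global ordering of all scheduled times to anchor the induction; you should restructure your argument around that device rather than around swapping $s_1$ and $s_2$.
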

\begin{proof}
For the sake of contradiction, let us suppose that $\sigma$ is not dynamic.
Let $F$ be the non-empty set of all the triplets $\langle u,s_1,s_2\rangle\in V^+_{s_1, s_2}\times\Sigma_P\times\Sigma_P$, 
for which the implication (L\ref{lem:dynamicimplequality}) does not hold.
Then, $\langle u,s_1,s_2\rangle\in F$ if and only if the following two hold:
\begin{enumerate}
\item $[\sigma(s_1)]_u \leq [\sigma(s_1)]_v$ for every $v\in\Delta(s_1; s_2)$; 
\item $[\sigma(s_1)]_u \neq [\sigma(s_2)]_u$.
\end{enumerate}
Let $\langle\hat{u}, \hat{s_1}\rangle\triangleq \arg\min\{[\sigma(s_1)]_u\mid \exists{s_2}\, 
\langle u,s_1,s_2\rangle\in F\}$ be an event whose scheduling time is minimum and for which ($1$) and ($2$) hold.
Since $\langle \hat{u}, \hat{s_1}\rangle$ is minimum in time, then 
$[\sigma(\hat{s_1})]_{\hat{u}} \leq [\sigma(s_2)]_{\hat{u}}$ for every $s_2\in\Sigma_P$ such that $\langle\hat{u}, \hat{s_1}, s_2\rangle\in F$;
moreover, since $\langle\hat{u}, \hat{s_1}, s_2\rangle\in F$, then $[\sigma(s_1)]_{\hat{u}} \neq [\sigma(s_2)]_{\hat{u}}$ by (2), 
so that $[\sigma(\hat{s_1})]_{\hat{u}} < [\sigma(s_2)]_{\hat{u}}$. 
At this point, recall that $\sigma$ is $\epsilon$-dynamic by hypothesis, hence $[\sigma(\hat{s_1})]_{\hat{u}} < [\sigma(s_2)]_{\hat{u}}$ implies that 
there exists $v\in\Delta(\hat{s_1}; s_2)$ such that $[\sigma(\hat{s_1})]_{\hat{u}}\geq [\sigma(\hat{s_1})]_v+\epsilon>[\sigma(\hat{s_1})]_v$, 
but this inequality contradicts ($1$). 
Indeed, $F=\emptyset$ and $\sigma$ is thus dynamic.
\end{proof}
In Section~\ref{sect:epsilon}, the following theorem is proved.
\begin{Thm}\label{thm:boundexp} 
For any dynamically-consistent \CSTN $\Gamma$, 
where $V$ is the set of events and $\Sigma_P$ is the set of scenarios, 
we have that $\hat{\epsilon}(\Gamma) \geq |\Sigma_P|^{-1}|V|^{-1}$.
\end{Thm}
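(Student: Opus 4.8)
The plan is to exploit the hypothesis that $\Gamma$ is dynamically-consistent: from a viable dynamic strategy I will extract a single combinatorial pattern that selects, for each $H_\epsilon$-constraint, which term realizes the minimum on its right-hand side; fixing this pattern turns the search for a viable $\epsilon$-dynamic strategy into an ordinary system of difference constraints whose right-hand sides depend affinely on $\epsilon$, after which an integrality argument on the cycles of that system delivers the stated bound.

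First I would fix a viable dynamic strategy $\sigma$ and, via Lemma~\ref{lem:dynamic_impl_epsilon}, some $\epsilon_0>0$ for which $\sigma$ is $\epsilon_0$-dynamic. Introduce one real variable $x_{s,v}$ for each pair $(s,v)$ with $s\in\Sigma_P$ and $v\in V^+_s$; there are $\sum_{s\in\Sigma_P}|V^+_s|\le|\Sigma_P|\,|V|$ of them. For every triple $(s_1,s_2,u)$ with $u\in V^+_{s_1,s_2}$, record which term realizes the minimum in $H_{\epsilon_0}(s_1;s_2;u)$ at $\sigma$ --- either $[\sigma(s_2)]_u$ or some $[\sigma(s_1)]_v+\epsilon_0$ with $v\in\Delta(s_1;s_2)$ --- and call this the pattern $\Pi$. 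For each $\epsilon\ge 0$ let $\Lambda_\Pi(\epsilon)$ be the difference system consisting of the viability inequalities $x_{s,v}-x_{s,u}\le w$, one for each $\langle u,v,w\rangle\in A^+_s$ and each $s\in\Sigma_P$, together with, for each triple, the single inequality $x_{s_2,u}-x_{s_1,u}\le 0$ or $x_{s_1,v}-x_{s_1,u}\le-\epsilon$ dictated by $\Pi$. Two observations: (i) $\sigma$ satisfies $\Lambda_\Pi(\epsilon_0)$, since it is viable and, being $\epsilon_0$-dynamic, the selected term of each $H_{\epsilon_0}$-constraint witnesses the corresponding inequality; (ii) any solution of $\Lambda_\Pi(\epsilon)$ with $\epsilon>0$ is a viable and $\epsilon$-dynamic strategy, because the viability block gives feasibility for every $\Gamma^+_s$, and the inequality selected for $(s_1,s_2,u)$ forces the right-hand side of $H_\epsilon(s_1;s_2;u)$ to be $\le[\sigma(s_1)]_u$. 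Hence it suffices to prove that $\Lambda_\Pi(\epsilon)$ is feasible for $\epsilon=|\Sigma_P|^{-1}|V|^{-1}$.

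The core step is a cycle analysis. A system of difference constraints is feasible iff its constraint multigraph has no negative cycle, equivalently no negative \emph{simple} cycle; and every arc of the constraint graph of $\Lambda_\Pi(\epsilon)$ carries either an integer weight (the viability weights, which lie in $\Z$, and the zeros) or the weight $-\epsilon$ (exactly one such arc per triple with a second-kind selection). Thus any simple cycle $C$ has total weight $W_C-\epsilon\,k_C$, where $W_C\in\Z$ and $0\le k_C\le|C|\le|\Sigma_P|\,|V|$, since $C$ is simple and there are at most $|\Sigma_P|\,|V|$ variables. By observation (i), $\Lambda_\Pi(\epsilon_0)$ is feasible, so $W_C-\epsilon_0 k_C\ge 0$ for every such $C$: if $k_C=0$ this is $W_C\ge 0$; if $k_C\ge1$ it forces $W_C\ge\epsilon_0 k_C>0$, hence $W_C\ge1$ by integrality. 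In either case $W_C-\epsilon\,k_C\ge 0$ for $\epsilon=|\Sigma_P|^{-1}|V|^{-1}$, since in the second case $\epsilon\,k_C\le 1\le W_C$. So $\Lambda_\Pi(\epsilon)$ has no negative cycle, hence is feasible, and by observation (ii) $\Gamma$ is $\epsilon$-dynamically-consistent; therefore $\hat{\epsilon}(\Gamma)\ge|\Sigma_P|^{-1}|V|^{-1}$.

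The main obstacle is exactly this last step: one must recognize that fixing $\Pi$ \emph{linearizes} an a priori disjunctive problem, that the constant part $W_C$ of every cycle is an integer, and that mere feasibility at the unknown and possibly minuscule $\epsilon_0$ already forces $W_C\ge1$ whenever the cycle meets a strict $(-\epsilon)$-arc. This integral ``rounding up to $1$'' is what upgrades the purely qualitative ``$\exists\,\epsilon>0$'' of Lemma~\ref{lem:dynamic_impl_epsilon} to a concrete lower bound, once cycle lengths are bounded by the number $|\Sigma_P|\,|V|$ of variables. (Here I take the arc weights of $\Gamma$ to be integers, as standing throughout.)
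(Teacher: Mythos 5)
Your proof is correct, and while its first half coincides with the paper's strategy, its second half proceeds by a genuinely different argument. Like the paper, you begin by fixing, for each disjunctive $H_{\epsilon}$-constraint, the term that realizes the minimum under a known $\epsilon_0$-dynamic strategy; this is exactly the paper's construction of the \STN $T^{\phi}_{\epsilon}(\Gamma)$ from $\H_{\epsilon}(\Gamma)$ by replacing each hyperarc with the head $\hat{h}=\arg\min_{h\in H_A}(\phi(h)-w_A(h))$, so your ``pattern $\Pi$'' and the paper's witness selection produce the same linearized difference system on at most $|\Sigma_P|\,|V|$ variables. Where you diverge is in certifying feasibility of that system at $\epsilon=|\Sigma_P|^{-1}|V|^{-1}$: the paper argues constructively, perturbing the given schedule $\phi$ by snapping each fractional part $r_v$ to $(\texttt{pos}(v)-1)/(|\Sigma_P||V|)$ while preserving the order of fractional parts and all integer parts, and then verifying each constraint of the two kinds ($w\in\Z$ versus $w=-\epsilon$) directly; you instead invoke the negative-cycle criterion for difference-constraint systems and observe that every simple cycle has weight $W_C-\epsilon k_C$ with $W_C\in\Z$ and $k_C\le|\Sigma_P|\,|V|$, so that feasibility at the unknown $\epsilon_0>0$ already forces $W_C\ge 1$ whenever $k_C\ge 1$. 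Both routes hinge on the same integrality of the original arc weights, and both expose why the denominator is the number $|\Sigma_P|\,|V|$ of expanded variables; the paper's rounding has the advantage of explicitly transforming a given schedule into one that works at the target $\epsilon$ (which matches the algorithmic narrative of Section~\ref{sect:epsilon}), whereas your cycle argument is shorter, uses only the standard Bellman--Ford-type feasibility characterization, and makes the ``rounding up to $1$'' mechanism more transparent. I find no gap in your reasoning; the only points worth stating explicitly are that the constraint graph is a multigraph (so parallel viability and selection arcs coexist, but a simple cycle still has at most $|\Sigma_P|\,|V|$ arcs) and that imposing the $\Pi$-selected inequality at the new $\epsilon$ suffices for $H_{\epsilon}(s_1;s_2;u)$ even if a different term happens to realize the minimum for the new solution — both of which your write-up already handles correctly.
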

Notice that, in Definition~\ref{def:consistency}, dynamic-consistency was defined by strict-inequality and equality constraints.
However, by Theorem~\ref{thm:epsilonconsistency}, 
dynamic-consistency can also be defined in terms of $H_{\epsilon}$-constraints only 
(\ie no strict-inequalities are required).
\begin{Thm}\label{thm:epsilonconsistency}
Let $\epsilon\triangleq |\Sigma_P|^{-1}|V|^{-1}$. 
Then, $\Gamma$ is dynamically-consistent if and only if $\Gamma$ is $\epsilon$-dynamically-consistent.
\end{Thm}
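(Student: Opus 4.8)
The plan is to obtain this statement as a short consequence of material already in hand: Lemma~\ref{lem:epsilon_impl_dynamic} for one direction, and Theorem~\ref{thm:boundexp} (together with the monotonicity lemma stated just before the definition of the reaction time $\hat{\epsilon}$, asserting that $\epsilon$-dynamic-consistency persists for every smaller positive reaction time) for the other. No new combinatorial argument is needed here; all the real work has been pushed into Theorem~\ref{thm:boundexp}.

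For the ``if'' direction, I would suppose $\Gamma$ is $\epsilon$-dynamically-consistent with $\epsilon = |\Sigma_P|^{-1}|V|^{-1}$, and fix a strategy $\sigma\in\S_\Gamma$ that is both viable and $\epsilon$-dynamic. Since $\epsilon\in(0,+\infty)$, Lemma~\ref{lem:epsilon_impl_dynamic} gives that $\sigma$ is dynamic; being also viable, $\sigma$ witnesses that $\Gamma$ is dynamically-consistent. This half is immediate. For the ``only if'' direction, I would assume $\Gamma$ is dynamically-consistent. Theorem~\ref{thm:boundexp} then asserts $\hat{\epsilon}(\Gamma)\geq |\Sigma_P|^{-1}|V|^{-1}=\epsilon$; reading the definition of $\hat{\epsilon}$ at face value, this says $\Gamma$ is $\hat{\epsilon}(\Gamma)$-dynamically-consistent, and since $0<\epsilon\leq\hat{\epsilon}(\Gamma)$, the monotonicity lemma yields that $\Gamma$ is $\epsilon$-dynamically-consistent. (If one prefers to regard $\hat{\epsilon}$ as a supremum not necessarily attained, the same conclusion follows by contraposition: were $\Gamma$ not $\epsilon$-dynamically-consistent, the monotonicity lemma would force it to be non-$\epsilon''$-dynamically-consistent for every $\epsilon''\geq\epsilon$, hence $\hat{\epsilon}(\Gamma)<\epsilon$, contradicting Theorem~\ref{thm:boundexp}.)

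I do not anticipate a genuine obstacle in assembling this proof: the only point requiring care is that the $\epsilon$ produced must work \emph{uniformly} over all triples $(s_1,s_2,u)\in\Sigma_P\times\Sigma_P\times V^+_{s_1,s_2}$, but Theorem~\ref{thm:boundexp} already delivers a single uniform value. The genuine difficulty --- trading the strict inequalities hidden inside Definition~\ref{def:consistency} (through $\scHst$ and $\Con$) for a fixed positive slack $\epsilon$, and proving that the slack $|\Sigma_P|^{-1}|V|^{-1}$ actually suffices --- is precisely the content of Theorem~\ref{thm:boundexp}, proved in Section~\ref{sect:epsilon}. The present theorem merely repackages that bound into a clean, strict-inequality-free characterization of dynamic-consistency, which is exactly the form needed to reduce \DCC\ to \HTN-consistency and hence to \MPG{s}.
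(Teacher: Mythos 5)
Your proposal is correct and is essentially the paper's own (implicit) derivation: the paper never spells out a separate proof of this theorem, precisely because it follows by combining Theorem~\ref{thm:boundexp} (whose proof in Section~\ref{sect:epsilon} in fact directly establishes $\epsilon$-dynamic-consistency for $\epsilon=|\Sigma_P|^{-1}|V|^{-1}$, so the detour through $\hat{\epsilon}$ and the monotonicity lemma is not even needed) with Lemma~\ref{lem:epsilon_impl_dynamic} for the converse. Your handling of whether $\hat{\epsilon}$ is attained is a reasonable precaution but, as noted, can be bypassed entirely.
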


By Theorem~\ref{thm:epsilonconsistency}, 
any algorithm for checking $\epsilon$-dynamic-consistency can be used to check dynamic-consistency as well. 

\subsection{A Singly-Exponential Time Algorithm for \DCC}\label{subsect:Algo}

In this section, we present the first singly-exponential time algorithm for solving \DCC, 
also producing a dynamic execution strategy whenever the input \CSTN is dynamically-consistent. 
Hereafter, let us denote $\N_0\triangleq \N\setminus\{0\}$.
The main result of this paper is summarized in the following theorem, which is proven in this section.
\begin{Thm}\label{thm:mainresult} 
The following two propositions hold.
\begin{enumerate}
\item There exists an 
$O(|\Sigma_P|^{2}|A|^2 +|\Sigma_P|^3|A||V||P| + |\Sigma_P|^4|V|^2|P|)WD$ time algorithm 
deciding \eDCC on input $\langle\Gamma, \epsilon\rangle$, 
for any \CSTN $\Gamma=\langle V, A, L, \Ord, \Ord{V}, P \rangle$ and any rational number $\epsilon=N/D$ where $N,D\in \N_0$.
In particular, given any $\epsilon$-dynamically-consistent \CSTN $\Gamma$, 
the algorithm returns as output a viable and $\epsilon$-dynamic execution strategy for $\Gamma$.
\item There exists an $O(|\Sigma_P|^{3}|A|^2|V| + |\Sigma_P|^4|A||V|^2|P| + |\Sigma_P|^5|V|^3|P|)W$ time algorithm for checking \DCC  
on any input $\Gamma=\langle V, A, L, \Ord, \Ord{V}, P \rangle$.
In particular, given any dynamically-consistent \CSTN $\Gamma$, 
the algorithm returns a viable and dynamic execution strategy for $\Gamma$.
\end{enumerate}
Here, $W\triangleq \max_{a\in A} |w_a|$.
\end{Thm}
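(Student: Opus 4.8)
The plan is to reduce \eDCC\ to \HTNC\ by constructing, from a given \CSTN\ $\Gamma=\langle V, A, L, \Ord, \Ord{V}, P\rangle$ and a rational $\epsilon=N/D$, an equivalent \HTN\ $\H_{\Gamma,\epsilon}$ of singly-exponential size, and then invoke Theorem~\ref{Teo:MainAlgorithms}. The node set of $\H_{\Gamma,\epsilon}$ is the ``expansion'' $\{(v,s) \mid s\in\Sigma_P,\ v\in V^+_s\}$, so that the component $(v,s)$ carries the intended value $[\sigma(s)]_v$; this already has $O(|\Sigma_P|\,|V|)$ nodes. First I would encode viability: for each scenario $s$ and each labeled constraint $\langle v-u\le w(u,v),\ell\rangle\in A$ with $s(\ell)=\top$, add the standard arc $(v,s)-(u,s)\le w(u,v)$; this is just a disjoint copy of each \STN\ restriction $\Gamma^+_s$ and contributes $O(|\Sigma_P|\,|A|)$ arcs. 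The interesting part is encoding the $H_\epsilon$-constraints. The constraint $H_\epsilon(s_1;s_2;u)$ asserts
\[
[\sigma(s_1)]_u \geq \min\Big(\{[\sigma(s_2)]_u\}\cup\{[\sigma(s_1)]_v+\epsilon \mid v\in\Delta(s_1;s_2)\}\Big),
\]
which is precisely a \emph{hyper-constraint} in the sense of the \HTN\ model: it says the tail $(u,s_1)$ dominates the minimum, over the heads $(u,s_2)$ and $\{(v,s_1) : v\in\Delta(s_1;s_2)\}$, of the head's value minus an integer weight ($0$ for the head $(u,s_2)$ and $-\epsilon$, i.e.\ after scaling, for each head $(v,s_1)$). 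So each triple $\langle s_1,s_2,u\rangle$ contributes exactly one hyperarc with at most $|\Ord V|+1\le |P|+1$ heads, giving $O(|\Sigma_P|^2\,|V|)$ hyperarcs of total size $O(|\Sigma_P|^2\,|V|\,|P|)$.

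The next step is the correctness equivalence: $\H_{\Gamma,\epsilon}$ is consistent if and only if $\Gamma$ is $\epsilon$-dynamically-consistent, with feasible schedulings of $\H_{\Gamma,\epsilon}$ corresponding bijectively to viable $\epsilon$-dynamic execution strategies via $[\sigma(s)]_v := \phi(v,s)$. The forward direction is immediate from the definitions of viability and of $\epsilon$-dynamic (Definition~\ref{def:epsilonconsistency}): the \STN-copy arcs force $\sigma(s)$ feasible for $\Gamma^+_s$, and the hyperarcs are literally the $H_\epsilon$-constraints. For the converse one just reads the same equivalences backwards; the only subtlety is that $\phi$ is defined on the whole expansion, so one checks that restricting $\phi(\cdot,s)$ to $V^+_s$ yields a well-defined scheduling with the right domain, which holds by construction of the node set. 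To handle the rational $\epsilon=N/D$ one scales all weights by $D$: replace every $w(u,v)$ by $D\cdot w(u,v)$ and use weight $-N$ in place of $-\epsilon$ on the relevant heads; feasibility is invariant under this uniform rescaling, and now all weights are integers with $W'\le D\cdot W$ (where $W'$ also absorbs the $O(N)=O(D)$ term, but since $\epsilon\le W$ typically one keeps the bound as $WD$). Running the algorithm of Theorem~\ref{Teo:MainAlgorithms}(1) on $\H_{\Gamma,\epsilon}$, with $|V_\H|=O(|\Sigma_P|\,|V|)$, number of hyperarcs $|\A_\H|=O(|\Sigma_P|\,|A| + |\Sigma_P|^2|V|)$, $m_{\A_\H}=O(|\Sigma_P|\,|A| + |\Sigma_P|^2|V|\,|P|)$, and $W_\H=O(WD)$, and multiplying out $(|V_\H|+|\A_\H|)\,m_{\A_\H}\,W_\H$, yields exactly the claimed $O(|\Sigma_P|^{2}|A|^2 +|\Sigma_P|^3|A||V||P| + |\Sigma_P|^4|V|^2|P|)WD$ bound; and Theorem~\ref{Teo:MainAlgorithms}(2) returns the feasible scheduling, hence the strategy, in the positive case. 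This proves part~(1).

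Part~(2) then follows by combining part~(1) with Theorem~\ref{thm:epsilonconsistency}: set $\epsilon\triangleq|\Sigma_P|^{-1}|V|^{-1}$, so $N=1$ and $D=|\Sigma_P|\,|V|$, and run the algorithm of part~(1) on $\langle\Gamma,\epsilon\rangle$. By Theorem~\ref{thm:epsilonconsistency}, $\Gamma$ is dynamically-consistent iff it is $\epsilon$-dynamically-consistent, and by Lemma~\ref{lem:epsilon_impl_dynamic} any $\epsilon$-dynamic strategy returned is in particular dynamic; so the output strategy is viable and dynamic as required. Substituting $D=|\Sigma_P|\,|V|$ into the bound of part~(1) replaces each $WD$ by $W|\Sigma_P|\,|V|$, i.e.\ multiplies the three terms by an extra $|\Sigma_P|\,|V|$, yielding $O(|\Sigma_P|^{3}|A|^2|V| + |\Sigma_P|^4|A||V|^2|P| + |\Sigma_P|^5|V|^3|P|)W$, as claimed. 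The main obstacle is making sure the \HTN\ encoding of $H_\epsilon(s_1;s_2;u)$ really is a single hyperarc with the correct tail/heads/weights — in particular that the asymmetry of $\Delta(s_1;s_2)$ and the ``$+\epsilon$'' shift land on the right side of the $\ge\min$ — and then threading the several size parameters carefully through the complexity bound of Theorem~\ref{Teo:MainAlgorithms} so that the exponents and the $WD$ factor come out exactly as stated; the construction and the correctness proof themselves are routine once the hyperarc encoding is pinned down.
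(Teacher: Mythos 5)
Your proposal is correct and follows essentially the same route as the paper: the same expansion $\langle V^{\text{Ex}}_{\Gamma},\Lambda^{\text{Ex}}_{\Gamma}\rangle$, the same hyperarc $\alpha_{\epsilon}(s_1;s_2;u)$ with tail $u_{s_1}$, heads $\{u_{s_2}\}\cup\Delta(s_1;s_2)$ and weights $0$ and $-\epsilon$ encoding $H_\epsilon(s_1;s_2;u)$, the same scaling by $D$, the same invocation of Theorem~\ref{Teo:MainAlgorithms}, and the same derivation of part~(2) from part~(1) via Theorem~\ref{thm:epsilonconsistency} with $\epsilon=|\Sigma_P|^{-1}|V|^{-1}$. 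The size and time bounds you compute match those in Theorem~\ref{thm:mainreduction} and the paper's complexity analysis exactly.
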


We now present the reduction from \DCC to \HTNC. 

Firstly, we argue that any \CSTN can be viewed as a succinct representation 
which can be expanded into an exponential sized \STN.
The \emph{Expansion} of \CSTN{s} is introduced below.

\begin{Def}[Expansion $\langle V^{\text{Ex}}_{\Gamma}, \Lambda^{\text{Ex}}_{\Gamma}\rangle$]\label{def:expansion}
Let $\Gamma$ be a \CSTN $\langle V, A, L, \Ord, {\Ord}V, P \rangle$.
Consider the distinct \STN{s} $\langle V_s, A_s\rangle$,  
one for each scenario $s\in\Sigma_P$, defined as follows: 
\[V_s\triangleq\{v_s \mid v\in V^+_s\} \text{ and } A_s\triangleq\{\langle u_s, v_s, w\rangle \mid \langle u,v,w\rangle \in A^+_s\}.\]
We define the \emph{expansion} $\langle V^{\text{Ex}}_{\Gamma}, \Lambda^{\text{Ex}}_{\Gamma}\rangle$ of $\Gamma$ as follows:
\[\langle V^{\text{Ex}}_{\Gamma}, \Lambda^{\text{Ex}}_{\Gamma}\rangle\triangleq 
	\Big\langle\bigcup_{s\in\Sigma_P}V_s, \bigcup_{s\in\Sigma_P} A_s\Big\rangle.\] 
\end{Def}

Notice that $V_{s_1}\cap V_{s_2}=\emptyset$ whenever $s_1\neq s_2$ and that 
$\langle V^{\text{Ex}}_{\Gamma}, \Lambda^{\text{Ex}}_{\Gamma}\rangle$ is an \STN with at most $|V^{\text{Ex}}_{\Gamma}|\leq|\Sigma_P|\, |V|$ nodes 
and at most $|\Lambda^{\text{Ex}}_{\Gamma}|\leq |\Sigma_P|\, |A|$ standard arcs.

We now show that the expansion of a \CSTN can be enriched with some hyperarcs 
in order to model $\epsilon$-dynamic-consistency, by means of a particular \HTN which is denoted $\H_{\epsilon}(\Gamma)$.
\begin{Def}[\HTN $\H_{\epsilon}(\Gamma)$]\label{def:Hepsilonzero}
Given any $\epsilon\in(0, +\infty)$ and any \CSTN $\Gamma = \langle V, A, L, \Ord, {\Ord}V, P \rangle$,  
a corresponding \HTN denoted by $\H_{\epsilon}(\Gamma)$ can be defined as follows:
\begin{itemize}
\item For every scenarios $s_1, s_2\in\Sigma_P$ and every event $u\in V^+_{s_1, s_2}$,   
define a hyperarc $\alpha=\alpha_{\epsilon}(s_1; s_2; u)$ as follows 
(with the intention to model $H_{\epsilon}(s_1; s_2; u)$, see Def.~\ref{def:epsilonconsistency}):
\[\alpha_{\epsilon}(s_1; s_2; u)\triangleq \langle t_\alpha, H_\alpha, w_\alpha\rangle, \]
where: 
\begin{itemize}
\item $t_\alpha\triangleq u_{s_1}$ is the tail of the hyperarc $\alpha$; 
\item $H_\alpha\triangleq \{u_{s_2}\}\cup \Delta(s_1; s_2)$ is the set of the heads;   
\item $w_\alpha(u_{s_2})\triangleq 0$; $w_\alpha(v)\triangleq - \epsilon$ for each $v\in\Delta(s_1; s_2)$.  
\end{itemize}
\item Consider the expansion $\langle V^{\text{Ex}}_{\Gamma}, \Lambda^{\text{Ex}}_{\Gamma}\rangle$ of $\Gamma$. 
Then, $\H_{\epsilon}(\Gamma)$ is defined as  
$\H_{\epsilon}(\Gamma)\triangleq \langle V^{\text{Ex}}_{\Gamma}, \A_{H_{\epsilon}}\rangle$, 
where,  
\[\A_{H_{\epsilon}}\triangleq \Lambda^{\text{Ex}}_{\Gamma} \cup \bigcup_{\substack{s_1,s_2\in\Sigma_P \\ u\in V^+_{s_1, s_2}}}\alpha_{\epsilon}(s_1;s_2;u).\]
\end{itemize}
\end{Def}

Notice that each $\alpha_{\epsilon}(s_1; s_2; u)$ has size $|\alpha_{\epsilon}(s_1; s_2; u)| = O(\Delta(s_1;s_2))=O(|P|)$.
In Fig.~\ref{FIG:PseudocodeReduction-cstn-htn}, Algorithm~\ref{algo:pseudocode_construct_H} presents the pseudocode for constructing $\H_{\epsilon}(\Gamma)$.
An excerpt of the \HTN corresponding to the \CSTN of Fig.~\ref{FIG:cstn2} is depicted in \figref{FIG:cstn22htn}.

The following theorem establishes the connection between dynamic-consistency of \CSTN{s} and consistency of \HTN{s}.
\begin{Thm}\label{thm:mainreduction}
Given any \CSTN $\Gamma=\langle V, A, L, \Ord, {\Ord}V, P \rangle$, 
there exists a sufficiently small real number $\epsilon\in (0, +\infty)$ such that 
$\Gamma$ is dynamically-consistent if and only if $\H_{\epsilon}(\Gamma)$ is consistent.
Moreover, $\H_{\epsilon}(\Gamma)$ has at most 
$|V_{\H_{\epsilon}}|\leq |\Sigma_P|\, |V|$ nodes, $|\A_{\H_{\epsilon}}|=O(|\Sigma_P|\,|A| + |\Sigma_P|^2|V|)$ hyperarcs, 
and it has size at most $m_{\A_{\H_{\epsilon}}}=O(|\Sigma_P|\, |A| + |\Sigma_P|^2|V|\, |P|)$.
\end{Thm}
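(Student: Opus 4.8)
The plan is to split the statement into two parts that can be handled separately: \emph{(a)} dynamic-consistency of $\Gamma$ coincides with $\epsilon$-dynamic-consistency of $\Gamma$ for one uniformly chosen $\epsilon>0$; and \emph{(b)} for such a fixed $\epsilon$, $\epsilon$-dynamic-consistency of $\Gamma$ is — after a mere change of variables — literally the same question as whether the \HTN $\H_{\epsilon}(\Gamma)$ is consistent. For part \emph{(a)} I would invoke Theorem~\ref{thm:epsilonconsistency}: taking $\epsilon\triangleq|\Sigma_P|^{-1}|V|^{-1}$, the \CSTN $\Gamma$ is dynamically-consistent if and only if it is $\epsilon$-dynamically-consistent. (A self-contained alternative, avoiding Theorem~\ref{thm:epsilonconsistency}: if $\Gamma$ is dynamically-consistent then Lemma~\ref{lem:dynamic_impl_epsilon} produces some $\epsilon_0>0$ for which $\Gamma$ is $\epsilon_0$-dynamically-consistent, hence $\epsilon$-dynamically-consistent for every $\epsilon\in(0,\epsilon_0]$ by monotonicity of $\epsilon$-dynamic-consistency in $\epsilon$; conversely, Lemma~\ref{lem:epsilon_impl_dynamic} gives, for every $\epsilon>0$, that $\epsilon$-dynamic-consistency implies dynamic-consistency.) It then suffices to prove, for a fixed $\epsilon>0$, that $\Gamma$ is $\epsilon$-dynamically-consistent if and only if $\H_{\epsilon}(\Gamma)$ is consistent.

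The core step is the dictionary between execution strategies of $\Gamma$ and schedulings of the expansion. By Definition~\ref{def:expansion} the node set $V^{\text{Ex}}_{\Gamma}=\bigcup_{s\in\Sigma_P}V_s$ is the disjoint union of the blocks $V_s=\{v_s\mid v\in V^+_s\}$, so the assignment
\[\sigma\longmapsto\phi_\sigma,\qquad \phi_\sigma(v_s)\triangleq[\sigma(s)]_v\quad(s\in\Sigma_P,\ v\in V^+_s),\]
is a bijection from $\S_{\Gamma}$ onto the set of all schedulings of $V^{\text{Ex}}_{\Gamma}$. I would then verify the two facts that make this bijection carry $\epsilon$-dynamic-consistency onto \HTNC:
\begin{itemize}
\item \emph{Viability $\leftrightarrow$ standard arcs.} By definition $\sigma$ is viable iff $\sigma(s)$ is feasible for the \STN $\Gamma^+_s=\langle V^+_s,A^+_s\rangle$ for every $s$; since $A_s$ is the verbatim copy of $A^+_s$ on the block $V_s$, this is exactly the requirement that $\phi_\sigma$ satisfy every standard arc of $\Lambda^{\text{Ex}}_{\Gamma}=\bigcup_{s}A_s$.
\item \emph{$\epsilon$-dynamicity $\leftrightarrow$ hyperarcs.} Reading $\Delta(s_1;s_2)$ inside the expansion as the set $\{v_{s_1}\mid v\in\Delta(s_1;s_2)\}\subseteq V_{s_1}$, the feasibility condition $\phi(t_\alpha)\geq\min_{z\in H_\alpha}\big(\phi(z)-w_\alpha(z)\big)$ for the hyperarc $\alpha=\alpha_{\epsilon}(s_1;s_2;u)$ of Definition~\ref{def:Hepsilonzero}, using $t_\alpha=u_{s_1}$, $w_\alpha(u_{s_2})=0$ and $w_\alpha(v_{s_1})=-\epsilon$, unfolds into
\[\phi(u_{s_1})\ \geq\ \min\Big(\{\phi(u_{s_2})\}\cup\{\phi(v_{s_1})+\epsilon\mid v\in\Delta(s_1;s_2)\}\Big),\]
which, after substituting $\phi=\phi_\sigma$, is precisely the constraint $H_{\epsilon}(s_1;s_2;u)$ of Definition~\ref{def:epsilonconsistency}. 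Hence $\phi_\sigma$ satisfies all the $\alpha$-hyperarcs iff $\sigma$ is $\epsilon$-dynamic.
\end{itemize}
A minor technicality to dispatch here is that the hypergraph formalism forbids a hyperarc's tail from lying in its head-set: this occurs only for $\alpha_{\epsilon}(s_1;s_1;u)$, whose associated constraint is the tautology $[\sigma(s_1)]_u\geq[\sigma(s_1)]_u$ and which may simply be dropped, and for the occasional $\alpha_{\epsilon}(s_1;s_2;u)$ with $u\in\Delta(s_1;s_2)$, where the offending head $u_{s_1}$ contributes the strictly larger term $\phi(u_{s_1})+\epsilon$ and may therefore be deleted from $H_\alpha$ (it is likewise redundant inside $H_{\epsilon}(s_1;s_2;u)$), restoring a legitimate hyperarc with the same feasible schedulings.

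Combining the two items, $\sigma$ is viable and $\epsilon$-dynamic iff $\phi_\sigma$ is feasible for $\H_{\epsilon}(\Gamma)=\langle V^{\text{Ex}}_{\Gamma},\A_{H_\epsilon}\rangle$; since $\sigma\mapsto\phi_\sigma$ is onto the schedulings of $V^{\text{Ex}}_{\Gamma}$, this gives that $\Gamma$ is $\epsilon$-dynamically-consistent iff $\H_{\epsilon}(\Gamma)$ is consistent, and part \emph{(a)} then closes the chain back to plain dynamic-consistency. For the quantitative claims, the bounds $|V^{\text{Ex}}_{\Gamma}|\leq|\Sigma_P|\,|V|$ and $|\Lambda^{\text{Ex}}_{\Gamma}|\leq|\Sigma_P|\,|A|$ were already recorded after Definition~\ref{def:expansion}; there are at most $|\Sigma_P|^2|V|$ hyperarcs $\alpha_{\epsilon}(s_1;s_2;u)$ — one per triple $\langle s_1,s_2,u\rangle$ — each of size $|H_\alpha|+1=|\Delta(s_1;s_2)|+2=O(|P|)$ because $\Delta(s_1;s_2)\subseteq\Ord{V}$ and $|\Ord{V}|=|P|$. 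Summing these contributions yields $|\A_{H_\epsilon}|=O(|\Sigma_P|\,|A|+|\Sigma_P|^2|V|)$ and $m_{\A_{H_\epsilon}}=O(|\Sigma_P|\,|A|+|\Sigma_P|^2|V|\,|P|)$, as required.

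I expect the only genuinely delicate points to be the choice of a single $\epsilon$ making the equivalence hold in both directions — this is exactly where Theorem~\ref{thm:epsilonconsistency} (resting on the sharp bound of Theorem~\ref{thm:boundexp}) earns its keep over the strategy-dependent $\epsilon$ of Lemma~\ref{lem:dynamic_impl_epsilon} — together with the bookkeeping around the handful of degenerate hyperarcs noted above; everything else is a direct, essentially syntactic, unfolding of Definitions~\ref{def:executionstrategy}, \ref{def:consistency}, \ref{def:epsilonconsistency}, \ref{def:expansion} and~\ref{def:Hepsilonzero}.
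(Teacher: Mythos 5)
Your core reduction --- the block-by-block bijection $\sigma\mapsto\phi_\sigma$ between execution strategies and schedulings of the expansion, with the standard arcs of $\Lambda^{\text{Ex}}_{\Gamma}$ capturing viability and each hyperarc $\alpha_{\epsilon}(s_1;s_2;u)$ unfolding verbatim into the constraint $H_{\epsilon}(s_1;s_2;u)$ --- is precisely the paper's proof, and your handling of the degenerate hyperarcs (tail occurring in the head-set, empty difference set) is a detail the paper glosses over. The one thing to repair is the order of dependencies in your part \emph{(a)}: you cannot invoke Theorem~\ref{thm:epsilonconsistency} here, because it rests on Theorem~\ref{thm:boundexp}, whose proof in Section~\ref{sect:epsilon} in turn invokes Theorem~\ref{thm:mainreduction} --- that route is circular. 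Your parenthetical ``self-contained alternative'' (Lemma~\ref{lem:dynamic_impl_epsilon} for the forward direction, Lemma~\ref{lem:epsilon_impl_dynamic} for the converse) is therefore not an alternative but the required argument, and it is essentially what the paper does; note that the theorem only asserts the existence of \emph{some} $\epsilon>0$, for which the strategy-dependent $\epsilon$ of Lemma~\ref{lem:dynamic_impl_epsilon} already suffices, the uniform value $|\Sigma_P|^{-1}|V|^{-1}$ being needed only later for the algorithm.
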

\begin{proof}
For any $\epsilon>0$, let $\H_{\epsilon}(\Gamma)=\langle V^{\text{Ex}}_{\Gamma}, \A_{H_{\epsilon}} \rangle$ be the \HTN of Definition~\ref{def:Hepsilonzero}.

(1) Firstly, we prove that, for any $\epsilon>0$, $\H_{\epsilon}(\Gamma)$ is consistent 
if and only if $\Gamma$ is $\epsilon$-dynamically-consistent.
($\Rightarrow$) Given any feasible scheduling $\phi:V_\Gamma^{\text{Ex}}\rightarrow\RR$ for $\H_{\epsilon}(\Gamma)$, 
let $\sigma_\phi(s)\in\S_{\Gamma}$ be the execution strategy 
defined as: $[\sigma_\phi(s)]_v\triangleq \phi(v_s)$, for every $v_s\in V_{\Gamma}^{\text{E}}$, where $v\in V$ and $s\in\Sigma_P$.
Notice that each hyperarc $\alpha_{\epsilon}(s_1; s_2; u)$ is satisfied by $\phi$  
if and only if the corresponding $H_{\epsilon}$-constraint $H_{\epsilon}(s_1; s_2; u)$ is satisfied by $\sigma_{\phi}$; 
moreover, recall that $\Lambda^{\text{Ex}}_{\Gamma}\subseteq \A_{H_{\epsilon}}$, 
and that $\Lambda^{\text{Ex}}_{\Gamma}$ contains all the original standard difference constraints of $\Gamma$. 
At this point, since $\phi$ is feasible for $\H_{\epsilon}(\Gamma)$, then $\sigma_{\phi}$ must be viable and $\epsilon$-dynamic for $\Gamma$.
Hence, $\Gamma$ is $\epsilon$-dynamically-consistent. 

($\Leftarrow$) Given any viable and $\epsilon$-dynamic execution strategy $\sigma\in\S_{\Gamma}$, for some $\epsilon>0$, 
let $\phi_{\sigma}:V^{\text{Ex}}_{\Gamma}\rightarrow\RR$ be the scheduling of $\H_{\epsilon}(\Gamma)$ 
defined as: $\phi_{\sigma}(v_s)\triangleq [\sigma(s)]_v$ for every $v_s\in V^{\text{Ex}}_{\Gamma}$, where $v\in V$ and $s\in\Sigma_P$.
Also in this case we have $\Lambda^{\text{Ex}}_{\Gamma}\subseteq \A_{H_{\epsilon}}$, 
and a moment's reflection reveals that each hyperarc $\alpha_{\epsilon}(s_1; s_2; u)$ is satisfied by $\phi_{\sigma}$  
if and only if  $H_{\epsilon}(s_1; s_2; u)$ is satisfied by $\sigma$.
At this point, since $\sigma$ is viable and $\epsilon$-dynamic for $\Gamma$, then $\phi_{\sigma}$ must be feasible for $\H_{\epsilon}(\Gamma)$.
Hence $\H_{\epsilon}(\Gamma)$ is consistent.

(2) At this point, by composition with (1), Lemma~\ref{lem:dynamic_impl_epsilon} implies that  
there exists a sufficiently small $\epsilon>0$ such that 
$\Gamma$ is dynamically-consistent if and only if $\H_{\epsilon}(\Gamma)$ is consistent.

(3) The size bounds follow directly from Definition~\ref{def:Hepsilonzero}. 
\end{proof}

\begin{figure}[!htb]
\removelatexerror
\begin{algorithm}[H]\label{algo:pseudocode_construct_H}
\caption{$\texttt{construct\_}\H$$(\Gamma, \epsilon)$}

\KwIn{a \CSTN $\Gamma\triangleq\langle V, A, L, \Ord, {\Ord}V, P \rangle$, 
a rational $\epsilon>0$}

\ForEach{($s\in\Sigma_P$)}{

$V_s\leftarrow\{v_s \mid v\in V^+_{s}\}$\;
 
$A_s\leftarrow\{a_s\mid a\in A^+_s\}$\;

}

$\displaystyle V^{\text{Ex}}_{\Gamma}\leftarrow \cup_{\substack{s\in\Sigma_P}}V_s$\;

$\displaystyle \Lambda^{\text{Ex}}_{\Gamma}\leftarrow \cup_{\substack{s\in\Sigma_P}} A_s$\;

\ForEach{($s_1, s_2\in\Sigma_P$ \texttt{AND} $u\in V^+_{s_1, s_2}$)}{

$t_\alpha\leftarrow u_{s_1}$\;

$H_\alpha\leftarrow \{u_{s_2}\}\cup \Delta(s_1; s_2)$\;

$w_\alpha(u_{s_2})\leftarrow 0$\;

\ForEach{$v\in\Delta(s_1; s_2)$}{

$w_\alpha(v_{s_1})\leftarrow -\epsilon$\;

}

$\alpha_{\epsilon}(s_1; s_2; u)\leftarrow \langle t_\alpha, H_\alpha, w_\alpha\rangle$\;

}

$\displaystyle\A_{\H_{\epsilon}}\leftarrow \Lambda^{\text{Ex}}_{\Gamma} \cup 
	\bigcup_{\substack{s_1,s_2\in\Sigma_P \\ u\in V^+_{s_1, s_2}}}\alpha_{\epsilon}(s_1;s_2;u)$\;

$\H_{\epsilon}(\Gamma)\leftarrow \langle V^{\text{Ex}}_{\Gamma}, \A_{H_{\epsilon}}\rangle$\;

\Return{$H_{\epsilon}(\Gamma)$;}
\end{algorithm}

\removelatexerror
\begin{algorithm}[H]\label{algo:solve_epsilonzeroDCC}
\caption{\texttt{check\_CSTN-$\epsilon$-DC}$(\Gamma, \epsilon)$}

\KwIn{a \CSTN $\Gamma\triangleq\langle V, A, L, \Ord, {\Ord}V, P \rangle$, 
a rational number $\epsilon\triangleq N/D$, for $N,D\in\N_0$}

$\H_{\epsilon}(\Gamma)\leftarrow\texttt{construct\_}\H(\Gamma, \epsilon)$; \tcp{ref. Algorithm~\ref{algo:pseudocode_construct_H}}

\ForEach{($A=\langle t_A, H_A, w_A\rangle\in\A_{\H_{\epsilon}(\Gamma)}$ \texttt{AND} $h\in H_A$)}{
	$w_A(h)\leftarrow w_A(h) D$; \tcp{scale weights to $\Z$}
}

$\phi\leftarrow \texttt{check\_\HTNC}(\H_{\epsilon}(\Gamma))$; \tcp{ref. Thm~\ref{Teo:MainAlgorithms}}

\If{($\phi$ is a feasible scheduling of $\H_{\epsilon}(\Gamma)$)}{

\ForEach{(event node $v\in V_{\H_{\epsilon}(\Gamma)}$)}{
	$\phi(v)\leftarrow \phi(v)/ D$; \tcp{re-scale back to size w.r.t $\epsilon$}
}

\Return{$\langle\texttt{YES}, \phi\rangle$;}

}
\Else{

\Return{$\texttt{NO}$;}

}
\end{algorithm}
\begin{algorithm}[H]\label{algo:solve_DCC}
\caption{\texttt{check\_\DCC}$(\Gamma)$}

\KwIn{a \CSTN $\Gamma\triangleq\langle V, A, L, \Ord, {\Ord}V, P \rangle$}

$\hat{\epsilon} \leftarrow |\Sigma_P|^{-1}|V|^{-1}$; \tcp{ref. Thm.~\ref{thm:epsilonconsistency}}

\Return{\texttt{check\_CSTN-$\epsilon$-DC$(\Gamma, \hat{\epsilon})$};}

\end{algorithm}
\caption{Solving \DCC by reduction to \HTNC.}
\label{FIG:PseudocodeReduction-cstn-htn}
\end{figure}
The pseudo-code for checking \eDCC is given in Algorithm~\ref{algo:solve_epsilonzeroDCC}, 
whereas the pseudo-code for checking \DCC is provided in Algorithm~\ref{algo:solve_DCC}. 
The latter algorithm goes as follows. 
Firstly, it computes a sufficiently small $\epsilon>0$ by resorting to Theorem~\ref{thm:epsilonconsistency}, 
\ie $\hat{\epsilon}=|\Sigma_P|^{-1}|V|^{-1}$ (at line~1 of Algorithm~\ref{algo:solve_DCC}).
Secondly, it constructs $\H_{\hat{\epsilon}}(\Gamma)$ (at line~1 of Algorithm~\ref{algo:solve_epsilonzeroDCC}) 
and then it scales every hyperarc's weight to $\Z$ (at lines~2-3).
Thirdly, $\H_{\hat{\epsilon}}(\Gamma)$ is solved with the \HTNC algorithm underlying Theorem~\ref{Teo:MainAlgorithms} (at line~4), 
\ie an instance of the \HTNC problem is solved by reduction to the decision problem for \MPG{s}. 
If the \HTNC algorithm outputs \texttt{YES}, together with a feasible scheduling $\phi$ of $H_{\hat{\epsilon}}(\Gamma)$, 
then the time values of $\phi$ are scaled back to size \wrt $\hat{\epsilon}$ 
and then $\langle \texttt{YES}, \phi \rangle$ is returned as output (lines~5-8); 
otherwise, the output is simply \texttt{NO} (at line~10).
\begin{Rem}
The same algorithm, with essentially the same upper bound on its running time and space,
work also in case we allow for arbitrary boolean formulae as labels, rather than just conjunctions.
At the same time, hyperarc constraints can also be allowed inside the input \CSTN{s}, besides the standard arc constraints. 
Under this prospect, our algorithm actually solves a larger family of conditional temporal networks, 
that we may call \emph{Conditional~Hyper~Temporal~Networks}~(CHyTNs).
\end{Rem}
\begin{Rem}
We remark that the \HTN/\MPG algorithm that is at the heart of our approach requires integral weights (\ie it requires that $w(u,v)\in\Z$ for every $(u,v)\in A$), 
and we could not play it differently~\cite{CPR2014,CPR2015}. Moreover, the algorithm always computes integral solution to \HTN/\MPG{s} and, 
therefore, it always computes rational feasible schedules for 
the \CSTN{s} given in input. 
As such, this ``requirement`` actually turns out to be a plus in practice. 
To conclude, it is indeed integrality that allows us to analyze the algorithm quantitatively 
and to present a sharp lower bounding analysis on the critical value of the reaction time $\hat{\epsilon}$, 
where the \CSTN transits from being, to not being, dynamically consistent. 
We believe that these issues deserve much attention,  
going into them required an algorithmic discrete approach to the notion of numbers.  
\end{Rem}
Now, the correctness and the time complexity of Algorithm~\ref{algo:solve_DCC} is analyzed.
To begin, notice that some of the temporal constraints introduced 
during the reduction step depends on a sufficiently small parameter $\epsilon>0$, 
whose magnitude turns out to depend on the size of the input \CSTN.
It is now proved that the time complexity of the algorithm depends multiplicatively on $D$,  
provided that $\epsilon=N/D$ for some $N,D\in\N_0$.
In Section~\ref{sect:epsilon} we will present a sharp lower bounding analysis on $\hat{\epsilon}$, 
from which the (pseudo) singly-exponential time bound follows as corollary.
So, assume for a moment line~1 to be valid, we prove it in Theorem~\ref{thm:boundexp}.
As a corollary of Theorem~\ref{thm:mainreduction}, we have that Algorithm~\ref{algo:solve_DCC} correctly decides \DCC. 
The most time expensive step of the algorithm is clearly line~4 of Algorithm~\ref{algo:solve_epsilonzeroDCC}, 
which resorts to Theorem~\ref{Teo:MainAlgorithms} in order to solve an instance of \HTNC. 
From Theorem~\ref{thm:mainreduction} we have an upper bound on the size of $\H_{\epsilon}(\Gamma)$, while Theorem~\ref{Teo:MainAlgorithms} gives 
us a pseudo-polynomial upper bound for the computation time. 
Also, recall that we scale weights by a factor $D$ at lines~2-3 of Algorithm~\ref{algo:solve_epsilonzeroDCC}, 
where $\epsilon=N/D$ for some $N,D\in\N_0$. Thus, by composition, Algorithm~\ref{algo:solve_DCC} 
decides \DCC in a time $T_{|\Gamma|}$ which is bounded as follows, where $W\triangleq \max_{a\in A} |w_a|$: 
\[
T_{|\Gamma|}= O((| V_{H_{\epsilon}(\Gamma)} | + |\A_{\H_{\epsilon}(\Gamma)}|) m_{\A_{\H_{\epsilon}}(\Gamma)})WD. 
\]
Whence, the following holds: 
\[T_{|\Gamma|}=O(|\Sigma_P|^{2}|A|^2 + |\Sigma_P|^3|A||V||P| + |\Sigma_P|^4|V|^2|P|)WD.\]

By Theorem~\ref{thm:boundexp}, 
it is sufficient to check $\epsilon$-dynamic-consistency for $\epsilon=|\Sigma_P|^{-1} |V|^{-1}$.
An $O(|\Sigma_P|^{3}|V||A|^2+|\Sigma_P|^4|A||V|^2|P| + |\Sigma_P|^5|V|^3|P|)W$ 
worst-case time bound follows for Algorithm~\ref{algo:solve_DCC}. 
Since $|\Sigma_P|\leq 2^{\min(|P|, l)}$ (where $\ell$ is the number of distinct labels that appear in $\Gamma$), 
the singly-exponential time bound follows. This proves Theorem~\ref{thm:mainresult}.

\begin{figure}[!htb]
\centering
\begin{tikzpicture}[arrows=->,scale=.6,node distance=8 and 2]
 	\node[node] (A') {$A^{s_4}$};
	\node[node,xshift=0ex,left=of A'] (B') {$B^{s_4}$};
	\node[node,xshift=0ex,left=of B'] (C') {$C^{s_4}$};
	\node[node,xshift=0ex,left=of C', label={above:$p^{s_4}=\top$}] (P') {$\Ord_p^{s_4}$};
	\node[node,xshift=0ex,left=of P', label={right:$q^{s_4}=\top$}] (Q') {$\Ord_q^{s_4}$};
	
	\node[node, below=of Q',yshift=0ex, label={right:$q^{s_1}=\bot$}] (Q'''') {$\Ord_q^{s_1}$};	
	\node[node, right=of Q'''',yshift=0ex, label={below:$p^{s_1}=\bot$}] (P'''') {$\Ord_p^{s_1}$};
	\node[node, right=of P'''',yshift=0ex] (C'''') {$C^{s_1}$};
	\node[node, right=of C'''',yshift=0ex] (B'''') {$B^{s_1}$};
	\node[node, right=of B'''',yshift=0ex] (A'''') {$A^{s_1}$};

	\draw[] (A') to [bend right=22] node[xshift=-2ex, yshift=0ex,timeLabel,above] {$[10,10]$} (C'); 
	\draw[] (B') to [] node[timeLabel,above] {$2$} (C'); 
	\draw[] (B') to [] node[timeLabel,above] {$0$} (A'); 
	\draw[] (A') to [bend right=32] node[xshift=-2ex, yshift=0ex, timeLabel,above] {$[0,5]$} (P'); 
	\draw[] (A') to [bend right=42] node[xshift=-5ex,yshift=0ex, timeLabel,below] {$[0,9]$} (Q'); 
	\draw[] (P') to [] node[xshift=-1ex, timeLabel,above] {$10$} (C'); 
	
	\draw[] (A'''') to [bend left=22] node[xshift=-4ex, yshift=-1ex,timeLabel,below] {$[10,10]$} (C''''); 
	\draw[] (B'''') to [] node[timeLabel,above] {$0$} (A''''); 
	\draw[] (A'''') to [bend left=32] node[xshift=-4ex, yshift=0ex, timeLabel,below] {$[0,5]$} (P''''); 
	\draw[] (A'''') to [bend left=42] node[xshift=-4ex,yshift=2ex, timeLabel,below] {$[0,9]$} (Q''''); 
	\draw[] (P'''') to [] node[xshift=0ex,yshift=1ex, timeLabel,above] {$10$} (C''''); 
	\draw[] (Q'''') to [bend right=40] node[xshift=-1ex, timeLabel,below] {$1$} (C''''); 

	\draw[>=o, dotted] (A''''.north west) to [] node[xshift=0ex, yshift=2ex,timeLabel,above] {$0$} (A'.south); 
	\draw[>=o, dotted] (P') to [bend right=80] node[xshift=0ex, yshift=0ex,timeLabel,above] {$-\epsilon$} (A'.south); 
	\draw[>=o, dotted] (Q') to [bend right=80] node[xshift=0ex, yshift=0ex,timeLabel,above] {$-\epsilon$} (A'.south); 
	\draw[>=o, dotted] (B''''.north west) to [] node[xshift=0ex, yshift=2ex,timeLabel,above] {$0$} (B'.south); 
	\draw[>=o, dotted] (P') to [bend right=80] node[xshift=0ex, yshift=0ex,timeLabel,above] {$-\epsilon$} (B'.south); 
	\draw[>=o, dotted] (Q') to [bend right=80] node[xshift=0ex, yshift=0ex,timeLabel,above] {$-\epsilon$} (B'.south); 
	\draw[>=o, dotted] (C''''.north west) to [] node[xshift=0ex, yshift=2ex,timeLabel,above] {$0$} (C'.south); 
	\draw[>=o, dotted] (P') to [bend right=80] node[xshift=0ex, yshift=0ex,timeLabel,above] {$-\epsilon$} (C'.south); 
	\draw[>=o, dotted] (Q') to [bend right=80] node[xshift=0ex, yshift=0ex,timeLabel,above] {$-\epsilon$} (C'.south); 
	\draw[>=o, dotted] (P''''.north west) to [] node[xshift=0ex, yshift=2ex,timeLabel,above] {$0$} (P'.south); 
	\draw[>=o, dotted] (Q') to [bend right=30] node[xshift=0ex, yshift=0ex,timeLabel,above] {$-\epsilon$} (P'.south); 
	\draw[>=o, dotted] (Q''''.north west) to [] node[xshift=0ex, yshift=2ex,timeLabel,above] {$0$} (Q'.south); 
	\draw[>=o, dotted] (P') to [bend left=30] node[xshift=0ex, yshift=0ex,timeLabel,above] {$-\epsilon$} (Q'.south); 
	
	\draw[>=o, dotted] (A'.south west) to [] node[xshift=0ex, yshift=-2ex,timeLabel,above] {$0$} (A''''.north); 
	\draw[>=o, dotted] (P'''') to [bend left=80] node[xshift=0ex, yshift=0ex,timeLabel,above] {$-\epsilon$} (A''''.north); 
	\draw[>=o, dotted] (Q'''') to [bend left=80] node[xshift=0ex, yshift=0ex,timeLabel,above] {$-\epsilon$} (A''''.north); 
	\draw[>=o, dotted] (B'.south west) to [] node[xshift=0ex, yshift=-2ex,timeLabel,above] {$0$} (B''''.north); 
	\draw[>=o, dotted] (P'''') to [bend left=80] node[xshift=0ex, yshift=0ex,timeLabel,above] {$-\epsilon$} (B''''.north); 
	\draw[>=o, dotted] (Q'''') to [bend left=80] node[xshift=0ex, yshift=0ex,timeLabel,above] {$-\epsilon$} (B''''.north); 
	\draw[>=o, dotted] (C'.south west) to [] node[xshift=0ex, yshift=-2ex,timeLabel,above] {$0$} (C''''.north); 
	\draw[>=o, dotted] (P'''') to [bend left=80] node[xshift=0ex, yshift=0ex,timeLabel,above] {$-\epsilon$} (C''''.north); 
	\draw[>=o, dotted] (Q'''') to [bend left=80] node[xshift=0ex, yshift=0ex,timeLabel,above] {$-\epsilon$} (C''''.north); 
	\draw[>=o, dotted] (P'.south west) to [] node[xshift=0ex, yshift=-2ex,timeLabel,above] {$0$} (P''''.north); 
	\draw[>=o, dotted] (Q'''') to [bend left=30] node[xshift=0ex, yshift=0ex,timeLabel,above] {$-\epsilon$} (P''''.north); 
	\draw[>=o, dotted] (Q'.south west) to [] node[xshift=0ex, yshift=-2ex,timeLabel,above] {$0$} (Q''''.north); 
	\draw[>=o, dotted] (P'''') to [bend right=30] node[xshift=0ex, yshift=0ex,timeLabel,above] {$-\epsilon$} (Q''''.north); 
\end{tikzpicture}
\caption{An excerpt of the \HTN $\H_{\epsilon}(\Gamma)$ corresponding to the \CSTN $\Gamma$ of Fig.~\ref{FIG:cstn2}, 
in which two scenarios, $s_4$ and $s_1$, are considered and the corresponding hyper-constraints $H(s_4;s_1;u)$ are depicted as dotted hyperarcs.}
\label{FIG:cstn22htn}
\end{figure}

\section{Bounding Analysis on the Reaction Time $\hat{\epsilon}$}\label{sect:epsilon}
In this section we present an asymptotically sharp lower bound for $\hat{\epsilon}(\Gamma)$,  
that is the critical value of reaction time where the \CSTN transits from being, 
to not being, dynamically-consistent. The proof technique introduced in this analysis is applicable more in general, 
when dealing with linear difference constraints which include strict inequalities. 
Moreover, this bound implies that Algorithm~\ref{algo:solve_DCC} is a (pseudo) singly-exponential time algorithm for solving \DCC. 
To begin, we are going to provide a proof of Theorem~\ref{thm:boundexp}, but let us first introduce some further notation.

Let $\Gamma\triangleq\langle V, A, L, \Ord, {\Ord}V, P \rangle$ be a dynamically-consistent \CSTN.
By Theorem~\ref{thm:mainreduction}, there exists $\epsilon>0$ such that $\H_{\epsilon}(\Gamma)$ is consistent.
Then, let $\phi:V^{\text{Ex}}_{\Gamma}\rightarrow\RR$ be a feasible scheduling for $\H_{\epsilon}(\Gamma)$.
For any hyperarc $A=\langle t_A, H_A, w_A\rangle\in\A_{\H_{\epsilon}}$, define a standard arc $a_A$ as follows: 
\[a_A\triangleq \langle t_A, \hat{h}, w_A(\hat{h})\rangle, 
\text{ where } \hat{h}\triangleq \arg\min_{h\in H_A}\big( \phi(h)-w_A(h) \big).\]
Then, notice that the network  
$T^{\phi}_{\epsilon}(\Gamma)\triangleq \langle V^{\text{Ex}}_{\Gamma}, \bigcup_{A\in\A_{\H_\epsilon}} a_A\rangle$ is an \STN.
Moreover, $\phi$ is feasible for $T^{\phi}_{\epsilon}(\Gamma)$.
At this point, assuming $v\in V_\Gamma^{\text{Ex}}$, consider the fractional part $r_v$ of $\phi_v$, 
\ie \[r_v\triangleq \phi_v-\lfloor\phi_v\rfloor.\]
Then, let $R\triangleq \{r_v\}_{v\in V^{\text{Ex}}_\Gamma}$ be the set of all the fractional parts.
Sort $R$ by the common ordering on $\RR$ and  
assume that $S\triangleq \{r_1, \ldots, r_k\}$ is the resulting ordered set without repetitions, 
\ie $|S|=k$, $S=R$, $r_1<\ldots<r_k$.
Now, let $\texttt{pos}(v)$ be the index position such that: \[ 1\leq \texttt{pos}(v)\leq k\text{ and } r_{\texttt{pos}(v)}=r_v.\] 

Then, we define a new fractional part as follows: 
\begin{equation} r'_v\triangleq \frac{\texttt{pos}(v)-1}{|\Sigma_P||V|} \tag{NFP} \end{equation}
also, we define a new scheduling function as follows: \begin{equation} \phi'_v\triangleq \lfloor \phi_v\rfloor + r'_v \tag{NSF} \end{equation}

\begin{Rem}\label{rem:invariant}
Notice that (NFP) doesn't alter the ordering relation among the fractional parts, 
\ie \[ r'_u<r'_v\iff r_u<r_v, \text{ for any } u,v\in V^{\text{Ex}}_\Gamma,\]
moreover, observe that (NSF) doesn't change the value of any integral part, 
\ie \[\lfloor \phi'_u\rfloor = \lfloor\phi_u\rfloor, \text{ for any } u\in V^{\text{Ex}}_\Gamma.\]
\end{Rem}

We are now in the position to prove Theorem~\ref{thm:boundexp}.
\begin{proof}[Proof of Theorem~\ref{thm:boundexp}] 
Let $\Gamma$ be dynamically-consistent, by Theorem~\ref{thm:mainreduction}
there exists $\epsilon'>0$ such that $\H_{\epsilon'}(\Gamma)$ is consistent 
and admits some feasible scheduling $\phi:V^{\text{Ex}}_\Gamma\rightarrow\RR$.
Let $\epsilon\triangleq |\Sigma_P|^{-1}|V|^{-1}$.
We argue that $\phi'$, as defined in (NSF), is a feasible scheduling for the \STN $T^{\phi}_{\epsilon}(\Gamma)$.
Indeed, every difference constraint of $T^{\phi}_{\epsilon}(\Gamma)$ 
is of the form $\phi_v-\phi_u\leq w$, for some $w\in\Z$ or $w=-\epsilon$.
Consider the case $w\in\Z$. Then, $\phi'_v-\phi'_u \leq w$ holds because 
of Remark~\ref{rem:invariant}.
Now, consider the case $w=-\epsilon$. Then, $\phi_v-\phi_u\leq -\epsilon$ implies $\phi_v\neq \phi_u$. 
Hence, by Remark~\ref{rem:invariant}, we have $\phi'_v\neq \phi'_u$. 
At this point, observe that the difference between $\phi'_u$ and $\phi'_v$ 
is therefore at least $\epsilon$, \ie \[\phi'_u-\phi'_v\geq |\Sigma_P|^{-1}|V|^{-1}=\epsilon.\] 
That is to say, $\phi'_v - \phi'_u \leq -\epsilon$. 
This proves that $\phi'$ is a feasible scheduling for the \STN $T^{\phi}_{\epsilon}(\Gamma)$.
Since $T^{\phi}_{\epsilon}(\Gamma)$ is thus consistent, then $\H_{\epsilon}(\Gamma)$ is consistent as well.
Therefore, by Theorem~\ref{thm:mainreduction}, 
the \CSTN $\Gamma$ is $\epsilon$-dynamically-consistent.
\end{proof}

At this point, 
a natural question is whether the lower bound given by Theorem~\ref{thm:boundexp} 
can be improved up to $\hat{\epsilon}(\Gamma)=\Omega(|V|^{-1})$. 
In turn, this would improve the time complexity for Algorithm~\ref{algo:solve_DCC} by a factor $|\Sigma_P|$.
However, the following theorem shows that this is not the case by exhibiting 
a \CSTN for which $\hat{\epsilon}(\Gamma) = 2^{-\Omega(|P|)}$. 
This proves that the lower bound given by Theorem~\ref{thm:boundexp} is (almost) asymptotically sharp. 

\begin{Thm}\label{thm:exponential_epsilon}
For each $n\in N_0$ there exists a \CSTN $\Gamma^n$ such that $\hat{\epsilon}(\Gamma^n) < 2^{-n+1} = 2^{-|P^n|/3+1}$, 
where $P^n$ is the set of boolean variables of $\Gamma^n$.
\end{Thm}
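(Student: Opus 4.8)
The plan is to construct, for each $n \in \N_0$, an explicit \CSTN $\Gamma^n$ on $\Theta(n)$ propositional variables whose unique (up to the relevant freedom) viable-and-dynamic execution strategy is forced to separate the execution times of certain events by amounts that halve at each of $n$ successive ``levels'', so that the total separation available at the top level is only $\Theta(2^{-n})$. The skeleton I would use is a binary-counter-like gadget: introduce observation events $\Ord_{p_1}, \dots, \Ord_{p_n}$ together with auxiliary events $X_0, X_1, \dots, X_n$ (and possibly a few more per level, say on the order of three nodes per bit, to account for the $|P^n| = 3n$ bookkeeping in the statement), all constrained to lie inside a fixed window, e.g. $\langle X_0 - A \le 0, \lambda\rangle$, $\langle A - X_0 \le 0, \lambda\rangle$, and $\langle X_n - A \le 1, \lambda\rangle$, $\langle A - X_n \le -1, \lambda\rangle$, pinning the ``span'' of the gadget to the unit interval.

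The core of the construction is the recursive squeezing. At level $k$, I would install labeled constraints — conditioned on the partial scenario $p_1 \wedge \cdots \wedge p_{k-1}$ (or the appropriate sign pattern) — that force event $X_k$ to be scheduled strictly between $X_{k-1}$ and the observation $\Ord_{p_k}$, in such a way that the dynamic-consistency requirement (via Lemma~\ref{lem:dynamicimplequality}, i.e. the implication $\bigwedge_{v\in\Delta(s_1;s_2)}[\sigma(s_1)]_u \le [\sigma(s_1)]_v \Rightarrow [\sigma(s_1)]_u = [\sigma(s_2)]_u$) propagates a constraint of the form ``the room left at level $k$ is at most half the room left at level $k-1$''. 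Concretely: two scenarios that agree on $p_1,\dots,p_{k-1}$ but differ on $p_k$ must, after $\Ord_{p_k}$ is observed, agree on the schedule of everything placed before $\Ord_{p_k}$; combining this with a pair of opposing difference constraints (a ``$\le w$'' in one scenario-restriction and a ``$\le -w$'' forcing a strict separation in the other) yields, inductively, that if the level-$(k-1)$ gap is $\delta_{k-1}$ then any $\epsilon$-dynamic strategy needs $\delta_{k-1} \ge 2\epsilon$ to fit a level-$k$ split, hence $\delta_k \le \delta_{k-1}/2$. Starting from $\delta_0 = 1$ (the pinned span) this gives $\delta_n \le 2^{-n}$, and a final pair of constraints that demands a nonempty level-$n$ gap of size $> \epsilon$ forces $\epsilon < 2^{-n}$; with the slack in the estimates this yields $\hat\epsilon(\Gamma^n) < 2^{-n+1}$.

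To actually prove $\hat\epsilon(\Gamma^n) < 2^{-n+1}$ I would argue both directions around this bound informally: (i) \emph{upper bound on $\hat\epsilon$:} assume $\sigma$ is viable and $\epsilon$-dynamic with $\epsilon \ge 2^{-n+1}$ and derive a contradiction by running the halving inequality through all $n$ levels, as above; (ii) \emph{$\Gamma^n$ is dynamically-consistent at all} (so that $\hat\epsilon(\Gamma^n)$ is a meaningful finite positive number, cf. Lemma~\ref{lem:dynamic_impl_epsilon}): exhibit one viable dynamic $\sigma$, e.g. schedule each $\Ord_{p_k}$ and $X_k$ at dyadic rationals with denominator $2^n$ chosen according to the observed bits, which trivially respects all constraints and is dynamic because the level-$k$ placement depends only on $p_1,\dots,p_{k-1}$, all of which are observed strictly earlier. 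Care must be taken that all reasonability assumptions \emph{WD1--WD3} hold — in particular \emph{WD2} requires a strict $\langle \Ord_{p_k} - u \le -\epsilon_0, L(u)\rangle$ whenever $L(u)$ mentions $p_k$, which is automatically compatible with the gadget since the gadget already schedules $\Ord_{p_k}$ before anything labeled by $p_k$.

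The main obstacle I anticipate is the bookkeeping of labels so that the inductive step is clean: one must ensure that the constraint enforcing the level-$k$ split is labeled precisely by the scenario-prefix on $p_1,\dots,p_{k-1}$ (so that it is ``active'' in exactly the right restriction $\Gamma^+_s$), that \emph{WD1} is respected (the label subsumes the labels of its endpoints), and that the two scenarios used in the dynamic-consistency argument differ \emph{only} at $\Ord_{p_k}$ among the observation events occurring before the relevant event — i.e. that $\Delta(s_1;s_2) = \{\Ord_{p_k}\}$ at the moment the implication of Lemma~\ref{lem:dynamicimplequality} is invoked. Getting the hypergraph/\STN restrictions to line up so that this $\Delta$ is a singleton at each level — rather than accidentally containing earlier or later observation events — is the delicate part; the rest is the routine $\delta_k \le \delta_{k-1}/2$ recursion.
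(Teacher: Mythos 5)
Your overall architecture matches the paper's: an $n$-level gadget with three observation events per level, a span pinned to the unit interval, a recursive halving argument squeezing any viable dynamic strategy into a gap of $2^{-\Omega(n)}$ at the last level, plus an explicit dyadic witness strategy establishing dynamic consistency. However, the step that actually produces the halving is not justified as you state it, and it is the heart of the theorem. You write that fitting a level-$k$ split ``needs $\delta_{k-1}\ge 2\epsilon$, hence $\delta_k\le\delta_{k-1}/2$'' --- but that inference does not follow. If the planner could choose which of the two sub-pieces created by its placement of the mid-level event becomes the window for the next level, it would always keep the larger piece; the window would then shrink only additively (by about $\epsilon$ per level), yielding the bound $\epsilon=O(1/n)$ rather than $2^{-\Omega(n)}$. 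The exponential decay requires an \emph{adversarial} selection: the planner must commit to the offset $\delta_k$ of the mid-event inside the current window of length $\Delta_k$ \emph{before} learning the outcome of the level's last observation, and the constraints must be labeled so that this outcome redirects the anchors of the next window to either the left piece (length $\delta_k$) or the right piece (length $\Delta_k-\delta_k$); the worst-case scenario then always selects the smaller, giving $\Delta_{k+1}=\min(\delta_k,\Delta_k-\delta_k)\le\Delta_k/2$. In the paper this is precisely the role of the third observation event $Z_i$ of each level (whose truth value the adversarial scenario $\hat s$ sets according to whether $\delta_i\le\Delta_i/2$), together with the distance-$5$ constraints in $D_i$ that anchor $X_{i+1}$ to $X_i$ or to $Y_i$, and $Z_{i+1}$ to $Y_i$ or to $Z_i$, depending on $Z_i$. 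Your proposal accounts for the third node per level only as ``bookkeeping'' toward $|P^n|=3n$, which suggests the functional role of this branching observation is missing from the construction.

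A smaller point: you also need the mid-event of each level to be placed \emph{strictly} inside the current window, and in the paper this strictness is itself extracted from dynamic consistency (e.g.\ $Y_i$ is forced to the far-away time $X_i+2$ in the scenario $\neg X_i$, so in the all-true scenario the planner can only give $Y_i$ a different time by reacting strictly after observing $X_i$; similarly $Z_i$ must come strictly after $Y_i$). This is part of the label bookkeeping you flag as delicate, but it is what seeds the recursion at every level; without it the planner could collapse the split and escape the squeeze entirely.
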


\begin{proof}

For each $n\in\N_0$, define a \CSTN $\Gamma^n\triangleq\langle V^n, A^n, L^n, \Ord^n, {\Ord}V^n, P^n \rangle$ as follows.
See Fig.~\ref{FIG:exponentialepsilon-counterexample-CSTN} for a clarifying illustration.

\begin{figure}[!htb]
\centering
\begin{tikzpicture}[arrows=->,scale=0.88, node distance=1.5 and 1.5]
	\node[node, label={above:$Y_1?$}] (Y1) {$Y_1$};
 	\node[node, above left = of Y1, xshift=0ex, label={left:$\textbf{0}$}, label={above:\small $X_1?$}] (X1) {$X_1$};
	\node[node, above right = of Y1, xshift=0ex, label={above:\small $Z_1?$}] (Z1) {$Z_1$};
	\node[node, below = of Y1, yshift=-10ex, label={above:\small $Y_2?$}] (Y2) {$Y_2$};
	\node[node, above left = of Y2, xshift=0ex, label={above:\small $X_2?$}] (X2) {$X_2$};
	\node[node, above right = of Y2, xshift=0ex, label={above:\small $Z_2?$}] (Z2) {$Z_2$};
	\node[below left = of Y2, xshift=0ex] (fakeUL) {};
	\node[below right = of Y2, xshift=0ex] (fakeUR) {};
	\node[node, blackNode, scale=0.2, below = of Y2] (p1) {};
	\node[node, blackNode, scale=0.2, below = of p1, yshift=20ex] (p2) {};
	\node[node, blackNode, scale=0.2, below = of p2, yshift=20ex] (p3) {};
	\node[below left = of p3, yshift=8ex] (fakeDL) {};
	\node[below = of p3, yshift=8ex] (fakeDC) {};
	\node[below right = of p3, yshift=8ex] (fakeDR) {};
	\node[node, below = of fakeDC, yshift=-15ex, label={above:\small $Y_n?$}] (Yn) {$Y_n$};
	\node[node, above left = of Yn, xshift=0ex, label={above:\small $X_n?$}] (Xn) {$X_n$};
	\node[node, above right = of Yn, xshift=0ex, label={above:\small $Z_n?$}] (Zn) {$Z_n$};
	\draw[] (X1) to [] node[xshift=0ex, yshift=0ex,above] {\footnotesize $1, X_1 Y_1$} (Z1); 
	\draw[] (X1) to [] node[xshift=-3ex, yshift=0ex,below] {\footnotesize $[2], \neg X_1$} (Y1); 
	\draw[] (Y1) to [] node[xshift=3ex, yshift=0ex,below] {\footnotesize $[2], \neg Y_1$} (Z1); 
	\draw[] (X1) to [bend right=50] node[xshift=-4ex, yshift=0ex,above] {\footnotesize $[5], Z_1$} (X2);
	\draw[] (Z1) to [bend left=50] node[xshift=6ex, yshift=0ex,above] {\footnotesize $[5], \neg Z_1 X_2 Y_2$} (Z2);  
	\draw[] (Y1) to [] node[xshift=-3ex, yshift=0ex,above] {\footnotesize $[5], \neg Z_1$} (X2);  
	\draw[] (Y1) to [] node[xshift=4ex, yshift=0ex,above]  {\footnotesize $[5], Z_1 X_2 Y_2$} (Z2);  
	\draw[] (X2) to [] node[xshift=-3ex, yshift=0ex,below] {\footnotesize $[2], \neg X_2$} (Y2);  
	\draw[] (Y2) to [] node[xshift=3ex, yshift=0ex,below] {\footnotesize $[2], \neg Y_2$} (Z2);  
	\draw[] (Y2) to [] node[xshift=-3ex, yshift=0ex,above] {\footnotesize $[5], \neg Z_2$} (fakeUL); 
	\draw[] (Y2) to [] node[xshift=4ex, yshift=0ex,above] {\footnotesize $[5], Z_2 X_3 Y_3$} (fakeUR); 
	\draw[] (X2) to [bend right=50] node[xshift=-4ex, yshift=0ex,above] {\footnotesize $[5], Z_2$} (fakeUL); 
	\draw[] (Z2) to [bend left=50] node[xshift=6ex, yshift=0ex,above] {\footnotesize $[5], \neg Z_2 X_3 Y_3$} (fakeUR); 
	\draw[] (Xn) to [] node[xshift=-3ex, yshift=0ex,below] {\footnotesize $[2], \neg X_n$} (Yn); 
	\draw[] (Yn) to [] node[xshift=3ex, yshift=0ex,below] {\footnotesize $[2], \neg Y_n$} (Zn); 
	\draw[] (fakeDL) to [bend right=50] node[xshift=-4ex, yshift=1ex,above] {\footnotesize $[5], Z_{n-1}$} (Xn); 
	\draw[] (fakeDR) to [bend left=50] node[xshift=7ex, yshift=1ex,above] {\footnotesize $[5], \neg Z_{n-1} X_n Y_n$} (Zn); 
	\draw[] (fakeDC) to [] node[xshift=-2ex, yshift=-1ex,above] {\footnotesize $[5], \neg Z_{n-1}$} (Xn); 
	\draw[] (fakeDC) to [] node[xshift=2ex, yshift=-1ex,above] {\footnotesize $[5], Z_{n-1} X_n Y_n$} (Zn); 
\end{tikzpicture}
\caption{A CSTN $\Gamma^n$ such that $\hat{\epsilon}(\Gamma^n) = 2^{-\Omega(|P^n|)}$.}
\label{FIG:exponentialepsilon-counterexample-CSTN}
\end{figure}

\begin{itemize}
\item $V^n\triangleq\{X_i, Y_i, Z_i \mid 1\leq i\leq n\}$;
\item $A^n\triangleq B \cup \bigcup_{i=1}^{n} C_i\cup \bigcup_{i=1}^{n-1} D_i \cup E$ \\ where:
\begin{itemize}
\item 
$B\triangleq \{\langle X_1-v\leq 0, \lambda\rangle \mid v\in V^n\}\cup \{\langle Z_1 - X_1 \leq 1, X_1\wedge Y_1\rangle\}$;
\item 
$C_i \triangleq \{\langle Y_i - X_i \leq 2, \neg X_i\rangle, \langle X_i - Y_i \leq -2, \neg X_i\rangle,  
			\langle Z_i - Y_i \leq 2, \neg Y_i\rangle, \langle Y_i - Z_i \leq -2, \neg Y_i\rangle\}$;
\item 
$D_i \triangleq \{\langle X_{i+1} - X_i \leq 5, Z_i\rangle, \langle X_i - X_{i+1}\leq -5, Z_i\rangle, 
                         \langle X_{i+1} - Y_i\rangle\leq 5, \neg Z_i\rangle, \langle Y_i - X_{i+1}\leq -5, \neg Z_i\rangle,  
		        \langle Z_{i+1}-Y_i\leq 5, Z_i\wedge X_{i+1}\wedge Y_{i+1}\rangle, \langle Y_i-Z_{i+1}\leq -5, Z_i\wedge X_{i+1}\wedge Y_{i+1}\rangle, 
			 \langle Z_{i+1} - Z_i \leq 5, \neg Z_i \wedge X_{i+1}\wedge Y_{i+1}\rangle, 
                         \langle Z_i - Z_{i+1}\leq -5, \neg Z_i \wedge X_{i+1}\wedge Y_{i+1}\rangle \}$;
\item 
$E \triangleq \{\langle Y_n - X_n \leq 2, \neg X_n\rangle, \langle X_n - Y_n \leq -2, \neg X_n\rangle, 	
			\langle Z_n-Y_n \leq 2, \neg Y_n\rangle, \langle Y_n - Z_n \leq -2, \neg Y_n\rangle \}$;
\end{itemize}
\item $L^n(v)\triangleq\lambda$ for every $v\in V^n$; 
${\Ord}V^n\triangleq V^n$; 
$\Ord^n(v)\triangleq v$ for every $v\in {\Ord}V^n$;
$P^n\triangleq V^n$.
\end{itemize}

We exhibit a viable and dynamic execution strategy $\sigma_n:\Sigma_{P^n}\rightarrow \Phi_{V^n}$ for $\Gamma^n$.

Let $\{\delta_i\}_{i=1}^{n}$ and $\{\Delta_i\}_{i=1}^{n}$ be two real valued sequences s.t.:
\[(1)\;\Delta_1\triangleq 1; (2)\; 0<\delta_i<\Delta_i; 
(3)\; \Delta_{i}\triangleq \min(\delta_{i-1}, \Delta_{i-1}-\delta_{i-1}).\]
Then, the following also holds for every $1\leq i\leq n$: 
\[ (4)\; 0 < \Delta_i \leq 2^{-i+1}, \]
where the equality holds if and only if $\delta_i = \Delta_i/2$.

In what follows, provided that $s\in\Sigma_P$ and $\ell\in P^*$, 
we will denote $\mathds{1}_{s(\ell)}\triangleq 1$ if $s(\ell)=\top$ and $\mathds{1}_{s(\ell)}\triangleq 0$ if $s(\ell)=\bot$.

We are ready to define $\sigma_n(s)$ for any $s\in\Sigma_P$: 
\begin{itemize}
\item $[\sigma_n(s)]_{X_1}\triangleq 0$;
\item $[\sigma_n(s)]_{Y_1}\triangleq \delta_1 \mathds{1}_{s(X_1)} + 2\mathds{1}_{s(\neg X_1)}$; 
\item $[\sigma_n(s)]_{Z_1}\triangleq$
	$\mathds{1}_{s(X_1\wedge Y_1)} + (2+[\sigma_n(s)]_{Y_1})\mathds{1}_{s(\neg X_1 \vee \neg Y_1)}$;
\item $[\sigma_n(s)]_{X_{i}}\triangleq 5 + [\sigma_n(s)]_{X_{i-1}}\mathds{1}_{s(Z_{i-1})} + \\ 
	+ [\sigma_n(s)]_{Y_{i-1}}\mathds{1}_{s(\neg Z_{i-1})}$, for any $2\leq i\leq n$;
\item $[\sigma_n(s)]_{Y_i}\triangleq$ $[\sigma_n(s)]_{X_i} + \delta_i\mathds{1}_{s(X_i)} + 2\mathds{1}_{s(\neg X_i)}$, for any $2\leq i \leq n$;
\item $[\sigma_n(s)]_{Z_{i}}\triangleq \big(5+[\sigma_n(s)]_{Y_{i-1}}\mathds{1}_{s(Z_{i-1})} + \\
                + [\sigma_n(s)]_{Z_{i-1}}\mathds{1}_{s(\neg Z_{i-1})}\big)\mathds{1}_{s(X_{i}\wedge Y_{i})} + \\ 
                + (2+[\sigma_n(s)]_{Y_{i}})\mathds{1}_{s(\neg X_i\vee \neg Y_i)}$, for any $2\leq i\leq n$;
\end{itemize}
It is not difficult to prove, by induction on $n\geq 1$, that $\sigma_n$ is viable and dynamic for $\Gamma^n$.

Here we show that $\hat{\epsilon}(\Gamma^n) < 2^{-n+1}=2^{-|P^n|/3+1}$ for every $n\geq 1$.
Let us consider the following scenario $\hat{s}$ for $1\leq i\leq n$: 
\[
\hat{s}(X_i)\triangleq \hat{s}(Y_i)\triangleq \top; \;\;\; 
\hat{s}(Z_i)\triangleq \left\{ 
\begin{array}{ll}
\top, & \text{ if } \delta_i \leq \Delta_i/2 \\
\bot, & \text{ if } \delta_i > \Delta_i/2 \\
\end{array} \right. . 
\]
We assume that $\sigma$ is an execution strategy for $\Gamma^n$ 
and study necessary conditions to ensure that $\sigma$ is viable and dynamic, 
provided that the observations follow scenario $\hat{s}$.
First, $\sigma$ must schedule $X_1$ at time $[\sigma(\hat{s})]_{X_1}=0$. 
Then, since $\hat{s}(X_1)=\top$, we must have $0<[\sigma(\hat{s})]_{Y_1}<1$, because of the constraint $(Z_1-X_1\leq 1, X_1\wedge Y_1)$.
Stated otherwise, it is necessary that: 
\[0 < [\sigma(\hat{s})]_{Y_1} - [\sigma(\hat{s})]_{X_1} < \Delta_1.\]
After that, since $\hat{s}(Y_1)=\top$, then $\sigma$ must schedule $Z_1$ at time $[\sigma(\hat{s})]_{Z_1}=1=\Delta_1$.
A moment's reflection reveals that almost identical necessary conditions now recur for $X_2, Y_2, Z_2$, 
with the crucial variation that it will be necessary to require: $0 < [\sigma(\hat{s})]_{Y_2} < \Delta_2$. 
Indeed, proceeding inductively, it will be necessary that for every $1\leq i\leq n$ and every $n\in\N_0$: 
\[0 < [\sigma(\hat{s})]_{Y_i} - [\sigma(\hat{s})]_{X_i} < \Delta_i.\] 
As already observed in ($4$), we have $0<\Delta_n\leq 2^{-n+1}$. 
Thus, any viable and dynamic execution strategy $\sigma$ for $\Gamma^n$ must satisfy: 
\[\displaystyle0 < [\sigma(\hat{s})]_{Y_n} - [\sigma(\hat{s})]_{X_n} < \frac{1}{2^{n-1}}=\frac{1}{2^{|P^n|/3-1}}.\]
Thus, once the planner has observed the outcome $\hat{s}(X_n)=\top$ from the observation event $X_n$, 
then he must react by scheduling $Y_n$ within time $2^{-n+1}=2^{-|P^n|/3+1}$ in the future \wrt $[\sigma(\hat{s})]_{X_n}$.
Then $\hat{\epsilon}(\Gamma^n) < 2^{-n+1}=2^{-|P^n|/3+1}$ any $n\geq 1$.
\end{proof}

\section{Related Works}\label{sect:relatedworks} 
This section discusses of some alternative approaches offered by the current literature.
Recall that the article of Tsamardinos,~\etal~\cite{TVP2003} has been discussed already in the introduction.
The work of Cimatti,~\etal~\cite{Ci14} provided the first sound-and-complete algorithm for checking
the dynamic-controllability of \CSTN{s} with Uncertainty (CSTNU) 
and thus it can be employed for checking the dynamic-consistency of \CSTN{s} as a special case. 
The algorithm reduces to the problem of solving Timed Game Automata (TGA). 
Nevertheless, no worst-case bound on the time complexity of the procedure was provided in~\cite{Ci14}.
We observe that solving TGAs is a problem of much higher complexity than solving \MPG{s}, 
compare the following known facts:
solving 1-player TGAs is $\PSPACE$-complete and solving 2-player TGAs is $\EXP$-complete; 
on the contrary, the problem of determining \MPG{s} lie in $\NP\cap\coNP$ 
and it is currently an open problem to prove whether it lies in $\P$.
Indeed, the algorithm in~\cite{Ci14} is not singly-exponential time bounded.
Finally, a sound algorithm for checking the dynamic-controllability 
of CSTNUs was given by Combi, Hunsberger, Posenato~in~\cite{CHP13}.
However, it was not shown to be complete. To the best of our knowledge, 
it is currently open whether or not it can be extended in order to prove completeness.

\section{Conclusion}\label{sect:conclusions}
We gave the first singly-exponential time algorithm to check the 
dynamic-consistency of \CSTN{s}, also yielding dynamic execution strategies. 
The algorithm actually manages a few more general variants of the problem, 
where labels are not required to be conjunctions and hyperarc constraints can be empolyed in the input \CSTN{s}, 
besides the classical binary constraints. To summarize, at the heart of the algorithm a reduction to \MPG{s} is 
mediated by the \HTN model. The \CSTN is dynamically-consistent if and only if the corresponding \MPG is 
everywhere won, and a dynamic execution strategy can be conveniently read out by an everywhere winning positional strategy.
The size of this \MPG is at most 
polynomial in the number of the possible scenarios; as such, the term at the exponent is linear, 
at worst, in the number of the observation events. The same holds for the running time of the resulting algorithm.
In future works we would like to settle the exact computational complexity of \DCC, 
as well as to extend our approach in order to check the dynamic-controllability of \CSTN 
with Uncertainty~\cite{HPC12}. Finally, an extensive experimental evaluation is on the way.
\paragraph*{Acknowledgment}
This work was partially supported by \emph{Department of Computer Science, University of Verona, Italy} 
under Ph.D. grant ``Computational Mathematics and Biology``.

\bibliographystyle{plain}
\bibliography{biblio}

\begin{thebibliography}{10}

\bibitem{AllamigeonBG14}
X.~Allamigeon, P.~Benchimol, and S.~Gaubert.
\newblock The tropical shadow-vertex algorithm solves mean payoff games in
  polynomial time on average.
\newblock In {\em ICALP 2014, Copenhagen, Denmark, July 8-11, 2014,
  Proceedings, Part {I}}, pages 89--100, 2014.

\bibitem{BC12}
L.~Brim and J.~Chaloupka.
\newblock Using strategy improvement to stay alive.
\newblock {\em Int. J. Found. Comput. Sci.}, 23(3):585--608, 2012.

\bibitem{brim2011faster}
L.~Brim, J.~Chaloupka, L.~Doyen, R.~Gentilini, and J.F. Raskin.
\newblock Faster algorithms for mean-payoff games.
\newblock {\em Formal Methods in System Design}, 38(2):97--118, 2011.

\bibitem{ChatterjeeHKN14}
K.~Chatterjee, M.~Henzinger, S.~Krinninger, and D.~Nanongkai.
\newblock Polynomial-time algorithms for energy games with special weight
  structures.
\newblock {\em Algorithmica}, 70(3):457--492, 2014.

\bibitem{Ci14}
A.~Cimatti, L.~Hunsberger, A.~Micheli, R.~Posenato, and M.~Roveri.
\newblock Sound and complete algorithms for checking the dynamic
  controllability of temporal networks with uncertainty, disjunction and
  observation.
\newblock In {\em 21st Intern. Symp. on Temp. Repres. and Reasoning, {TIME}
  2014, Verona, Italy}, pages 27--36, 2014.

\bibitem{CHP13}
C.~Combi, L.~Hunsberger, and R.~Posenato.
\newblock An algorithm for checking the dynamic controllability of a
  conditional simple temporal network with uncertainty.
\newblock In {\em {ICAART} 2013 - Proc. of the 5th Intern. Conf. on Agents and
  Artif. Intell., Vol. 2, Spain, 2013}, pages 144--156, 2013.

\bibitem{CPR2014}
C.~Comin, R.~Posenato, and R.~Rizzi.
\newblock A tractable generalization of simple temporal networks and its
  relation to mean payoff games.
\newblock In {\em 21th International Symposium on Temporal Representation and
  Reasoning (TIME 2014)}, Verona, Italy, Sept 2014.

\bibitem{CPR2015}
C.~Comin, R.~Posenato, and R.~Rizzi.
\newblock Hyper temporal networks.
\newblock {\em CoRR}, abs/1503.03974, 2015.

\bibitem{DechterMP91}
R.~Dechter, I.~Meiri, and J.~Pearl.
\newblock Temporal constraint networks.
\newblock {\em Artificial Intelligence}, 49(1--3):61--95, 1991.

\bibitem{EhrenfeuchtMycielski:1979}
A.~Ehrenfeucht and J.~Mycielski.
\newblock Positional strategies for mean payoff games.
\newblock {\em International Journal of Game Theory}, 8(2):109--113, 1979.

\bibitem{Gradel2002}
E.~Gr\"{a}del, W.~Thomas, and T.~Wilke, editors.
\newblock {\em Automata Logics, and Infinite Games: A Guide to Current
  Research}.
\newblock Springer-Verlag New York, Inc., New York, NY, USA, 2002.

\bibitem{HPC12}
L.~Hunsberger, R.~Posenato, and C.~Combi.
\newblock The dynamic controllability of conditional stns with uncertainty.
\newblock In {\em Proc. of the Plan. and Plan Exec. for Real-World Syst.:
  Princip. and Pract. (PlanEx)}, ICAPS-2012, page 121–128, Atibaia, Sao
  Paulo, Brazil, 2012.

\bibitem{MoffittP05}
M.~D. Moffitt and M.~E. Pollack.
\newblock Applying local search to disjunctive temporal problems.
\newblock In {\em IJCAI-05, Proceedings of the Nineteenth International Joint
  Conference on Artificial Intelligence, Edinburgh, Scotland, UK}, pages
  242--247, 2005.

\bibitem{Oddi14}
A.~Oddi.
\newblock Constraint-based strategies for the disjunctive temporal problem:
  Some new results.
\newblock In {\em Proceedings of the Sixth European Conference on Planning},
  2014.

\bibitem{TsamardinosP03}
I.~Tsamardinos and M.~E. Pollack.
\newblock Efficient solution techniques for disjunctive temporal reasoning
  problems.
\newblock {\em Artif. Intell.}, 151(1-2):43--89, 2003.

\bibitem{TVP2003}
I.~Tsamardinos, T.~Vidal, and M.~Pollack.
\newblock Ctp: A new constraint-based formalism for conditional, temporal
  planning.
\newblock {\em Constraints}, 8(4):365--388, 2003.

\bibitem{ZwickPaterson:1996}
U.~Zwick and M.~Paterson.
\newblock The complexity of mean payoff games on graphs.
\newblock {\em Theoretical Computer Science}, 158:343--359, 1996.

\end{thebibliography}
\end{document}